\definecolor{forestgreen}{rgb}{0.13, 0.65, 0.13}
\tikzstyle{blockS} = [rectangle, rounded corners, minimum width=1cm, minimum height=0.8cm,text centered, draw=black, fill=green!30]
\tikzstyle{blockE} = [rectangle, rounded corners, minimum width=1cm, minimum height=0.8cm,text centered, draw=black, fill=yellow!30]
\tikzstyle{blockI} = [rectangle, rounded corners, minimum width=1cm, minimum height=0.8cm,text centered, draw=black, fill=red!20]
\tikzstyle{blockR} = [rectangle, rounded corners, minimum width=1cm, minimum height=0.8cm,text centered, draw=black, fill=blue!30]
\tikzstyle{blockD} = [rectangle, rounded corners, minimum width=1cm, minimum height=0.8cm,text centered, draw=black, fill=black!30]
\tikzstyle{arrowflow} = [thick,->,>=stealth]
\tikzstyle{arrowinteraction} = [thick,->,>=stealth,dashed]
\newtheorem{theorem}{Theorem}[section]
\newtheorem{remark}[theorem]{Remark}
\newtheorem{definition}[theorem]{Definition}
\newtheorem{lemma}[theorem]{Lemma}
\newtheorem{hypothesis}[theorem]{Hypothesis}
\newcommand{\norm}[1]{\| #1 \|_{[0,T]}}
\newcommand{\maig}{\geqslant}
\newcommand{\R}{\mathbb R}
\newcommand{\E}{\mathbf E}
\newcommand{\F}{\mathcal F}
\newcommand{\A}{\mathcal A}
\newcommand{\B}{\mathcal B}
\newcommand{\K}{\mathcal K}
\newcommand{\M}{\mathcal M}
\newcommand{\diag}{\mathrm{diag}}
\newcommand{\orange}[1]{\textcolor{black}{#1}}
\begin{document}

\title[Final size and partial distance estimate for the SEIRD model]{Final size and partial distance estimate for a two-group SEIRD model \thanks{Submitted on \today}}

\author{Alison M.V.D.L. Melo}
\address{A.M.V.D.L.M.: Universidade Federal do Vale do São Francisco - UNIVASF, 56304-917, Petrolina, Brazil}
\email{alison.melo@univasf.edu.br}

\author{Matheus C. Santos}
\address{M.C.S.: Departamento de Matem\'atica Pura e Aplicada-- IME, Universidade Federal do Rio Grande do Sul - UFRGS, 91509-900, Porto Alegre, Brazil}
\email{matheus.santos@ufrgs.br}

\date{June 23, 2021}

\begin{abstract}
		In this paper we consider a SEIRD epidemic model for a population composed by two groups of individuals with asymmetric interaction.  Given an \orange{approximate} solution for the two-group model, we estimate the error of this approximation to the unknown solution to the second group based on the known error that the approximation has with respect to the solution to the first group. We also study the final size of the epidemic for each group. We illustrate our results with the spread of the coronavirus disease 2019 (COVID-19) pandemic in the New York County (USA) for the initial stage of the contamination, and in the cities of Petrolina and Juazeiro (Brazil).
\end{abstract}

\keywords{Epidemic mathematical model, latency period, final size, distance of solutions}

\subjclass[2010]{92D25, 92D30, 34C60}

\maketitle

\section{Introduction}

The models for outbreak and spread of diseases have a long history of studies and analysis since the pioneering work of John Graunt, who made {a statistical} approach to understand public health and causes of deaths in England (\cite{Graunt}), and Daniel Bernoulli, who proposed the first mathematical model describing an infectious disease (\cite{Bernoulli}) and the impact of inoculation for the smallpox control (\cite{Bernoulli2}). The models we use nowadays (SIR models and extensions) originated in the works of Sir Ronald Ross (\cite{Ross}), who formulated a system of differential equations after his studies on the spread of malaria, and later, refined by William Kermack and Anderson McKendrick, who generalized the approach and established the condition for an epidemic to occur (\cite{KMcK}), which is now known as the basic reproduction number. Many related models and approaches have been studied since then (\cite{Brauer,Keeling,Maia,Murray}), and were specially used \orange{recently} to analyze the spread and to forecast the number of cases of COVID-19 epidemic (\cite{Bertuzzo1,Bertuzzo2,GM,MagalLatency,LMS,Prodanov,RLLKRHYC,SCZ,TT} and many others).

In this paper we study the case of a two-group epidemic model, where the infectious individuals of each group transmit the disease to the susceptible individuals of both groups according to an asymmetric interaction. The division in groups may happen for biological, geographic or even socioeconomic reasons. For instance, in \cite{ACEDO} \orange{an age structured} SIRS model was applied to study seasonal evolution of Respiratory Syncytial Virus (RSV) in \orange{Valencia}, Spain. Infection by RSV tends to be more severe \orange{in babies} under one year old. On the other hand, COVID-19 affects elderly people more aggressively  (\cite{NEJ,WHO}). A disease may also spread differently between locations. In \cite{CHALVET} \orange{the authors} study the spread of sleeping sickness by using a differential equation model where \orange{both human and vector populations} are divided into two patches according to location: plantation and village. Finally, socioeconomic inequality might play an important role in the spread of diseases such as COVID-19 as shown in \cite{SEAN} and \cite{MARTINS}.

After a susceptible individual becomes infected by interacting with an infectious one, there is a period of time before he can transmit the disease himself. This period of latency or exposure may or may not be relevant for the dynamics of a disease, depending on how long it may be \orange{compared to} the total duration of the infectious period. In the case of COVID-19, the latency period is on average 3-5 days (\cite{Lauer,Wang}) up to 14 days, which is 1-2 days less than the  incubation period, i.e., the period needed for the symptoms onset, which is on average 5-6 days, also up to 14 days (\cite{WHO}). Therefore, since the infectious period may vary between 5 days (mild cases) up to 6 weeks (severe or critical cases) (\cite{WHO,Lin}), this exposure period is relevant to the progression of the epidemic. That is why in this work we have chosen to study the two group SEIRD model, which includes the exposure period.

In this article, we study the system of equations for the two-group SEIRD model. The
system considered here is the following:
\begin{equation}\label{Eq.SEIRD}
\left\{ \begin{array}{l}
S'(t) = -\diag(S(t))\B I(t) \\
E'(t) = \diag(S(t))\B I(t) - \A E(t)\\
I'(t) = \A E(t) - (\Gamma+\M)I(t)\\
R'(t) = \Gamma I(t)\\
D'(t) = \M I(t)
\end{array}
\right.
\end{equation}
with initial conditions
\[S(0)=S_0 ,\; E(0)=E_0,\; I(0)=I_0,\; R(0)=R_0,\;D(0)=D_0\in \R^2_+, \]
where the vectorial functions $S(t),E(t),I(t),R(t)$ and $D(t)$ represent the classes of susceptible, exposed, infectious, recovered and deceased individuals, respectively, for groups 1 and 2 as
\[ S(t)\!=\!\begin{pmatrix}
S_1(t)\\S_2(t)
\end{pmatrix} ,
E(t)\!=\!\begin{pmatrix}
E_1(t)\\E_2(t)
\end{pmatrix} ,
I(t)\!=\!\begin{pmatrix}
I_1(t)\\I_2(t)
\end{pmatrix} ,
R(t)\!=\!\begin{pmatrix}
R_1(t)\\R_2(t)
\end{pmatrix} ,
D(t)\!=\!\begin{pmatrix}
D_1(t)\\D_2(t)
\end{pmatrix} .
\]

Note that, if $N=(N_1,N_2)$ is the number of individuals at each group at time $t=0$, by the system of equations \eqref{Eq.SEIRD} we have
\begin{equation}\label{Eq.NumIndividuals}
S(t)+E(t)+I(t)+R(t)+D(t) = N\;,\;\;\; \mbox{ for all }\; t\maig 0.
\end{equation}

The transmission of the disease is given by the interaction of the infectious individuals with the susceptible ones and expressed by the infection matrix, composed by the infection rates $\beta_{ij}$ of the susceptible individuals $S_i$ by the infectious individuals $I_j$
\[   \B = \begin{pmatrix}
\beta_{11}& \beta_{12} \\
\beta_{21}& \beta_{22}
\end{pmatrix} .\]
Once infected, each individual {becomes} an asymptomatic noninfectious individual in class $E$. The exit of the exposed class is given by the latency matrix
\[\A = \left( \begin{array}{cc}
\alpha_1& 0 \\
0& \alpha_2
\end{array}\right),\]
where $\alpha_1,\alpha_2>0$. This means that each newly infected individual takes the average time of $1/\alpha_i$ (according to the group) to become infectious, i.e, to be in the class $I$ and contribute with the communicability of the disease. The dynamics of individuals leaving the class $I$ is governed by the recovering rate matrix $\Gamma$ and the death rate matrix $\M$ given by
\[
\Gamma = \left( \begin{array}{cc}
\gamma_1& 0 \\
0& \gamma_2
\end{array}\right) \;,\;\; 
\M = \left( \begin{array}{cc}
\mu_1& 0 \\
0& \mu_2
\end{array}\right) .
\]
Thus the exit flux of infectious individuals {in each group} is composed by $\gamma_i I_i$, which represents those who recovered from the disease and become immune (per time unit), and $\mu_i I_i$ which gives the rate of deaths {caused} by the disease per unit of time. Hence the average time of infectiousness in each group is $1/(\gamma_i+\mu_i)$.

The flux of individuals for system \eqref{Eq.SEIRD} is shown in Figure \ref{Fig.flow}.

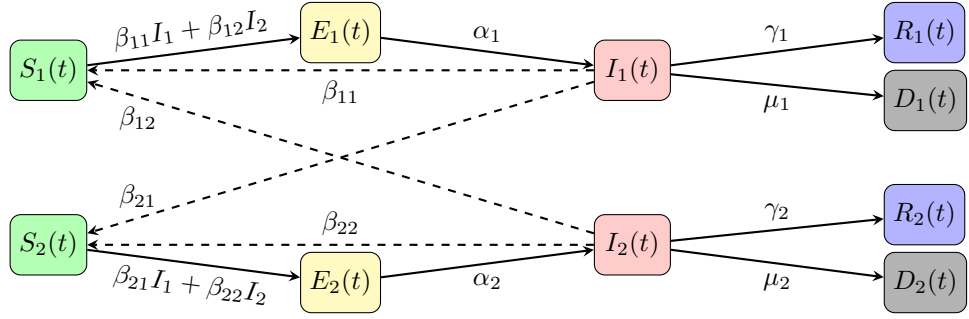
\begin{figure}[H]
	\centering
	\begin{tikzpicture}
	\tikzset{node distance = 1.5cm and 2.8cm}
	\node (S1) [blockS] {$S_1(t)$};
	\node (S2) [blockS, below= of S1] {$S_2(t)$};
	
	\node (E1) [blockE, right= of S1,yshift=0.5cm] {$E_1(t)$};
	\node (E2) [blockE, right= of S2,yshift=-0.5cm] {$E_2(t)$};

	\node (I1) [blockI,right = of E1,yshift=-0.5cm] {$I_1(t)$};
	\node (I2) [blockI, below = of I1] {$I_2(t)$};

	\node (R1) [blockR,right = of I1,yshift=0.5cm] {$R_1(t)$};
	\node (R2) [blockR, right= of I2,yshift=0.4cm] {$R_2(t)$};

	\node (D1) [blockD, right = of I1,yshift=-0.4cm] {$D_1(t)$};
	\node (D2) [blockD, right = of I2,yshift=-0.5cm] {$D_2(t)$};

	\draw [arrowflow] (S1) -- node[sloped,anchor=south] {$\beta_{11}I_1+\beta_{12}I_2$} (E1);
	\draw [arrowflow] (E1) -- node[anchor=south] {$\alpha_1$} (I1);
	\draw [arrowflow] (I1) -- node[anchor=south] {$\gamma_1$} (R1);
	\draw [arrowflow] (I1) -- node[anchor=north] {$\mu_1$} (D1);
	
	\draw [arrowflow] (S2) -- node[sloped,anchor=north] {$\beta_{21}I_1+\beta_{22}I_2$} (E2);
	\draw [arrowflow] (E2) -- node[anchor=north] {$\alpha_2$} (I2);
	\draw [arrowflow] (I2) -- node[anchor=south] {$\gamma_2$} (R2);
	\draw [arrowflow] (I2) -- node[anchor=north] {$\mu_2$} (D2);

	\draw [arrowinteraction] (I1) to node[pos=0.9, anchor=south] {$\beta_{21}$} (S2);
	\draw [arrowinteraction] (I2) to node[pos=0.9, anchor=north] {$\beta_{12}$}  (S1);
	\draw [arrowinteraction] (I1) [bend left=0] to node[pos=0.5,anchor=north] {$\beta_{11}$} (S1);
	\draw [arrowinteraction] (I2) [bend right=0] to node[pos=0.5,anchor=south] {$\beta_{22}$}(S2);
	
	\end{tikzpicture}
	\caption{Flow diagram of individuals in system \eqref{Eq.SEIRDcomponentwise}. The solid arrows represents the flux of individuals and the dashed arrows represent the interaction of infectious classes $I_1$ and $I_2$.}
	\label{Fig.flow}
\end{figure}

Therefore the system \eqref{Eq.SEIRD} can be explicitly rewritten as
\begin{equation}\label{Eq.SEIRDcomponentwise}
\left\{ \begin{array}{lll}
S_1'  = -(\beta_{11}I_1 +\beta_{12}I_2 ) S_1    &,& S_2'  = -(\beta_{21}I_1 +\beta_{22}I_2 ) S_2 \\
E_1'  = (\beta_{11}I_1 +\beta_{12}I_2 ) S_1 - \alpha_1 E_1  \qquad&,& E_2'  = (\beta_{21}I_1 +\beta_{22}I_2 ) S_2  - \alpha_2 E_2  \\
I_1'  = \alpha_1 E_1  - (\gamma_1+\mu_1)I_1  &,&I_2'  = \alpha_2 E_2  - (\gamma_2+\mu_2)I_2 \\
R_1'  = \gamma_1 I_1 &,&R_2'  = \gamma_2 I_2 \\
D_1'  = \mu_1 I_1 &,&D_2'  = \mu_2 I_2 
\end{array}\right.
\end{equation}

The present work deals with two different problems concerning the two-group SEIRD model. The first problem is estimating the final size of the epidemic.  {The study of the final size SIR models with multi-group population  was made in \cite{Andreasen} and \cite{Magal1}, for example.} We discuss it in  {Section} \ref{Section.FinalSize} and prove that for both groups the number of individuals who escape the {epidemic} is always positive. {We} also present a way of calculating these limit values in Theorem \ref{Theo.FinalSize}.

 {For the second problem, consider two sets of parameters, $\widetilde{\A}$, $\widetilde{\B}$, $\widetilde{\Gamma}$, $\widetilde{\M}$ and $\A, \B $, $\Gamma$, $\M$, for the system \eqref{Eq.SEIRD} with the same initial conditions, and let $\widetilde{S}, \widetilde{E}, \widetilde{I}, \widetilde{R}, \widetilde{D}$ and $S, E, I, R, D$ be their respective sets of solutions. We would like to obtain estimates for the distance between the solutions to second group, i.e., for the quantities $\norm{\widetilde{S}_2-S_2}$, $\norm{\widetilde{E}_2-E_2}$, $\norm{\widetilde{I}_2-I_2}$, $\norm{\widetilde{R}_2-R_2}$ and $\norm{\widetilde{D}_2-D_2}$, based on the respective distances between the solutions to first group. In other words, if the respective solutions to the first group are close to each other, than we would like to know how close the respective solutions to the second group are. We can also think of this problem as follows: suppose that the parameters $\widetilde{\A}$, $\widetilde{\B}$, $\widetilde{\Gamma}$, $\widetilde{\M}$ and the solutions to the second group $\widetilde{S}_2, \widetilde{E}_2, \widetilde{I}_2, \widetilde{R}_2, \widetilde{D}_2$ are unknown, but the solutions to the first group are known. Also, suppose that the parameters $\A, \B $, $\Gamma$, $\M$ and the respective solutions for both groups are known. Then, from the error $\norm{(\widetilde{S}_1, \widetilde{E}_1, \widetilde{I}_1, \widetilde{R}_1, \widetilde{D}_1)-(S_1, E_1, I_1, R_1, D_1)}$, we would like to estimate the respective error for the solutions to the second group, i.e., we would like to specify a range around the solutions $S_2, E_2, I_2, R_2, D_2$ where the unknown $\widetilde{S}_2, \widetilde{E}_2, \widetilde{I}_2, \widetilde{R}_2, \widetilde{D}_2$ must be. Unfortunately, we were not able to provide such estimates in this full generality. However, in Section \ref{Section.Estimates}, we present a way to solve this problem under the additional hypothesis that the distance $\norm{\widetilde{D}_2-D_2}$ of number of deaths in group 2 is known. Thus, in Theorem \ref{Theo.Estim} we present these estimates under this extra information on group 2 and with some hypotheses on the parameters of the system. }


Finally, in Section \ref{Section.COVID} we illustrate these results with the evolution of COVID-19 epidemic in two two-group populations: New York County and its neighboring counties in USA in Subsection \ref{SubSection.NY}, and the neighboring cities of Petrolina and Juazeiro in Brazil, in Subsection \ref{SubSection.PetrJua}. In the first example, we are able to estimate the evolution of infectious cases in New York County only using its data of deaths and the evolution of cases and deaths in the neighboring counties. On the other hand, the same method applied to the second example of the cities in Brazil, suggests that the cases in Juazeiro may be under-reported, based on the evolution of cases and deaths in Petrolina and on the deaths in Juazeiro.

Estimating the distance for the solutions in the second group as mentioned above might be useful in situations where reporting on part of the cases of infection is compromised. This could happen, for instance, to an unprivileged minority or to poor people who might have no proper access to the health system, or even a structural feature of the health system, where a relative under-development of the health system in one group could lead to a lower number of reported cases.

\orange{ It is worth mentioning that we are assuming that infections by infected individuals from group $j$ are proportional to $I_j/N_j$, since the rate in which individuals in group $i$ become infected, as presented in \eqref{Eq.SEIRDcomponentwise}, can be rewritten as $\beta_{ij}I_j = \beta_{ij}N_j (I_j/N_j)$. For many infections that can cause deaths, this is not a realistic assumption since it does not take into account the change in number of alive individuals. A more accurate model should exclude the total number $D_j$ of deceased individuals from the infecting process and thereby suppose that the force of infection is proportional $I_j/(N_j-D_j)$.} For many applications, this is not a significant change if the number of deceased individuals is small relatively to the population size, as it is at the beginning of an epidemic for example. But in a long-time analysis, these two models may be considerably different, specially in cases where the mortality rates $\mu_i$ are large. In the following section we present results on the final size analysis not considering the effects that the change in alive population size has on the infectious rates, and the results presented does not seem adaptable to that case. Although this is a very interesting problem, the authors are not aware of any final size analysis that includes deaths apart from \cite{Diekmann}.

We are also assuming with model \eqref{Eq.SEIRDcomponentwise} that individuals from different groups interact freely in a homogeneous space, as studied in \cite{Andreasen} and \cite{Magal1}. The probability of an infected individual from group $i$ entering into contact with individuals from group $j$ is proportional to the group size. For a much more general approach that takes into account different population nodes and the flux of individuals between them, see for example \cite{Bertuzzo1} and \cite{Bertuzzo2}.

\orange{Finally, another limitation of model \eqref{Eq.SEIRD} that we would like to mention is that it considers $1/(\gamma_i+\mu_i)$ as the average time of infection for each group, whether it ends up with death or recovery.  A more precise model should take into account different periods of time for infection depending on the severity and the outcome of the cases, since, as we mentioned before, it may vary from 5 days to 6 weeks in the case of the COVID-19 disease. In \cite{SOFONEA}, the authors present a more accurate model that considers extra compartments that covers different age of infection depending on the severity and also on the outcome of the disease.}


\section{Final Size} \label{Section.FinalSize}
In this section we analyze the final size of the epidemic for the two-group model. We prove that the number of exposed and infectious individuals always goes to zero as time goes to infinity, and that there is always a positive number of individuals who escape the epidemic. We remark that our results for the final size of the two-group SEIRD model  {and the respective proof} are very similar to the results given in \cite{Magal1} for the SIR model, since the models are close. Therefore, our proof is adapted from the one given by the authors in that reference. The main difference here is the inclusion of the class of exposed individuals, which affects \orange{the time in which} an individual starts to contribute to the spread of the disease. 

{For the results in this section, we will need the following definition: }
\begin{definition}\label{def.irreducible}
{The $n\times n$ matrix $A$ is irreducible when $A$ cannot be transformed into block upper-triangular form by simultaneous row-column permutations, i.e, when there is no permutation matrix $P$ such that $PAP^{-1}$ is block upper-triangular.}

{Note that, if $n=2$, $A$ is irreducible if and only if $A_{12}, A_{21} \neq 0$.}
\end{definition}

We will assume the following hypotheses on the matrices $\B$ and $\Gamma+\M$:
\begin{hypothesis}\label{H1-B}
	$\B$ is a nonnegative irreducible matrix. {By Definition \ref{def.irreducible}, this is equivalent to assuming that $\beta_{12},\beta_{21}>0$.}
\end{hypothesis}

\begin{hypothesis}\label{H2-MN}
	\orange{$\alpha_1,\alpha_2>0$} ,  $\gamma_1, \gamma_2, \mu_1, \mu_2\geq0$ and $\gamma_1 +\mu_1>0$ and $\gamma_2 +\mu_2>0$
\end{hypothesis}

We start with the following representation formulas for the solutions $E_i$ and $I_i$ of the exposed and infectious individuals, respectively, of each group:

\begin{lemma}\label{explicitLemma} For each $i=1,2$, let  $S_i$, $E_i$ and $I_i$ be the components of the solution to the system \eqref{Eq.SEIRDcomponentwise}. Then
	\begin{eqnarray}
	E_i(t)&=&-S_i(t)+\alpha_i e^{-\alpha_i t}\int_0^tS_i(s)e^{\alpha_i s}\,ds +(S_i(0)+E_i(0))e^{-\alpha_i t}, \label{Eq.explicitE} \\
	I_i(t) &=& I_i(0)e^{-(\gamma_i+\mu_i) t}+\alpha_i e^{-(\gamma_i+\mu_i) t}\int_0^t e^{(\gamma_i+\mu_i)s}E_i(s)\,ds  . \label{Eq.explicitI}
	\end{eqnarray}
\end{lemma}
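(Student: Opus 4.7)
Both identities are first-order linear ODE manipulations on the components of the system \eqref{Eq.SEIRDcomponentwise}, so the plan is simply to apply the integrating factor method to each of the two scalar equations, after a small rearrangement for the $E_i$ formula.

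For \eqref{Eq.explicitI}, the third equation in \eqref{Eq.SEIRDcomponentwise} reads
\[
I_i'(t) + (\gamma_i+\mu_i)I_i(t) = \alpha_i E_i(t).
\]
Multiplying by the integrating factor $e^{(\gamma_i+\mu_i)t}$ and integrating from $0$ to $t$ gives
\[
e^{(\gamma_i+\mu_i)t}I_i(t) - I_i(0) = \alpha_i\int_0^t e^{(\gamma_i+\mu_i)s} E_i(s)\,ds,
\]
and dividing through by $e^{(\gamma_i+\mu_i)t}$ yields exactly \eqref{Eq.explicitI}.

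For \eqref{Eq.explicitE}, the key observation is that $E_i$ does not satisfy a clean scalar linear equation on its own (the inhomogeneous term involves $I_1,I_2,S_i$), but the sum $u_i := S_i + E_i$ does. Adding the first two equations of \eqref{Eq.SEIRDcomponentwise} for the same index $i$, the nonlinear terms $(\beta_{i1}I_1+\beta_{i2}I_2)S_i$ cancel, leaving
\[
u_i'(t) = -\alpha_i E_i(t) = -\alpha_i u_i(t) + \alpha_i S_i(t).
\]
This is now a first-order linear ODE in $u_i$ with forcing $\alpha_i S_i$. Multiplying by $e^{\alpha_i t}$ and integrating from $0$ to $t$ gives
\[
u_i(t) = (S_i(0)+E_i(0))e^{-\alpha_i t} + \alpha_i e^{-\alpha_i t}\int_0^t S_i(s)e^{\alpha_i s}\,ds,
\]
and solving $u_i = S_i + E_i$ for $E_i$ produces \eqref{Eq.explicitE}.

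There is no serious obstacle here: both formulas are straightforward consequences of Duhamel's principle once one notices the cancellation that makes $S_i+E_i$ satisfy a scalar linear equation. The only ``trick'' worth emphasizing is that one should not try to integrate the $E_i$ equation directly — the inhomogeneous term $(\beta_{i1}I_1+\beta_{i2}I_2)S_i$ is not known a priori — but rather exploit the conservation-type identity $(S_i+E_i)' = -\alpha_i E_i$ to reduce to a scalar linear equation whose inhomogeneity is expressed in terms of $S_i$ alone.
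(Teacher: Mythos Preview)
Your argument is correct and essentially the same as the paper's: both exploit the cancellation in $(S_i+E_i)' = -\alpha_i E_i$ and then apply an integrating factor, with the only cosmetic difference that the paper writes this as $E_i' + \alpha_i E_i = -S_i'$ and integrates the right-hand side by parts, whereas you substitute $E_i = u_i - S_i$ first so that the forcing is $\alpha_i S_i$ and no integration by parts is needed. The treatment of $I_i$ is identical.
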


\begin{proof}
	Firstly, adding the equations for $S$ and $E$, we obtain
	\[   S'(t)+E'(t) = -\A E(t).     \]
	Componentwise, it holds for each $i=1,2$ that
	\begin{equation}\label{Eq.EeS}
	E_i'(t)+\alpha_i E_i(t) = -S'_i(t) .  
	\end{equation}   
	
	Thus, multiplying by $e^{\alpha_i t}$ we can rewrite it as
	\[  \frac{d}{dt}\left(e^{\alpha_i t}E(t)\right) = -e^{\alpha_i t} S'_i(t) .\]
	
	Finally, an integration by parts leads to the result.
	
	The expression for $I_i$ follows straight from the $I$-equation.
\end{proof}

With Lemma \ref{explicitLemma} above, we can obtain the asymptotic values for the solutions $E$ and $I$: 
\begin{lemma}
	For any initial conditions, there exist the limits 
	\begin{eqnarray*}
		S^\infty &=& \lim_{t\to+\infty}S(t) \;,\;\; E^\infty = \lim_{t\to+\infty}E(t)\;,\;\; I^\infty = \lim_{t\to+\infty}I(t)\;,\\  
		R^\infty &=& \lim_{t\to+\infty}R(t)\;\;\mbox{ and } \;\;D^\infty = \lim_{t\to+\infty}D(t) .
	\end{eqnarray*}
	Furthermore, by Lemma \ref{explicitLemma}, it follows that $E^\infty=0$ and $I^\infty =0$.
\end{lemma}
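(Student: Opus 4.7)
The plan is to combine elementary monotonicity and boundedness arguments with the representation formulas of Lemma \ref{explicitLemma}. First, standard arguments for the quasi--positive vector field in \eqref{Eq.SEIRDcomponentwise} show that the nonnegative cone is forward invariant, and together with the conservation identity \eqref{Eq.NumIndividuals} this confines each compartment to $[0,N_i]$. Reading off signs directly from \eqref{Eq.SEIRDcomponentwise}, $S_i$ is nonincreasing while $R_i$ and $D_i$ are nondecreasing, so by monotone convergence the limits $S_i^\infty$, $R_i^\infty$ and $D_i^\infty$ exist.

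The one auxiliary fact I would prove is the following elementary convolution--to--limit lemma: if $f\in C([0,\infty))$ is bounded with $f(s)\to L$ as $s\to\infty$ and $\alpha>0$, then
\[
\alpha e^{-\alpha t}\int_0^t f(s)\,e^{\alpha s}\,ds \;\longrightarrow\; L \qquad \text{as } t\to\infty.
\]
This is a routine $\varepsilon$--splitting: given $\varepsilon>0$, pick $T$ so that $|f(s)-L|<\varepsilon$ for $s\geq T$, bound the $[0,T]$ piece by its exponential prefactor (which decays to $0$) and the $[T,t]$ piece by $L\pm\varepsilon$ using $\alpha\int_T^t e^{-\alpha(t-s)}\,ds = 1-e^{-\alpha(t-T)}$.

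Applying this lemma with $f=S_i$ and $\alpha=\alpha_i$ to the representation \eqref{Eq.explicitE}, together with the obvious decay of $(S_i(0)+E_i(0))e^{-\alpha_i t}$, yields
\[
\lim_{t\to\infty} E_i(t) \;=\; -S_i^\infty + S_i^\infty + 0 \;=\; 0,
\]
so $E^\infty$ exists and equals $0$. Using Hypothesis \ref{H2-MN}, which guarantees $\gamma_i+\mu_i>0$, a second application with $f=\alpha_i E_i$ and $\alpha=\gamma_i+\mu_i$ to the representation \eqref{Eq.explicitI} gives $I_i(t)\to 0$; the free term $I_i(0)e^{-(\gamma_i+\mu_i)t}$ vanishes for the same reason. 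Hence $I^\infty$ exists and equals $0$, which is the full content of the lemma.

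I do not anticipate any genuine obstacle: the only non--trivial ingredient is the convolution--to--limit lemma, whose proof is completely standard. As a sanity check, once $S_i^\infty$, $E_i^\infty$ and $I_i^\infty$ are in hand one may re--derive the existence of $R_i^\infty$ and $D_i^\infty$ from \eqref{Eq.NumIndividuals}, which recovers and strengthens the monotone--convergence step.
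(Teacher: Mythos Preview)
Your proposal is correct and follows essentially the same approach as the paper: monotonicity plus boundedness for $S_i,R_i,D_i$, then the representation formulas \eqref{Eq.explicitE}--\eqref{Eq.explicitI} to handle $E_i$ and $I_i$. The only cosmetic difference is that the paper invokes L'H\^opital's Rule for the convolution integral where you give an explicit $\varepsilon$--splitting argument (which in fact the paper also records in the remark immediately following the lemma).
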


\begin{proof}
	Since $\beta_{i,j}\maig 0$, $\gamma_i>0$ and $\mu_i>0$ for all $i,j \in\{1,2\}$, it follows from the system \eqref{Eq.SEIRD} that $S_i(t), R_i(t)$ and $D_i(t)$ are monotone. Furthermore, by the restriction \eqref{Eq.NumIndividuals}, it follows that these functions are also bounded in $[0,\infty)$. Therefore, there exists the limits $S_i^\infty$, $R_i^\infty$ and $D_i^\infty$ for all $i=1,2$.
	
	For the functions $E_i$ and $I_i$, $i=1,2$, we use the representation given by Lemma \ref{explicitLemma}. Then applying L'H\^opital's Rule, we obtain that $E_i^\infty,I_i^\infty = 0$ for $i=1,2$.
\end{proof}

\begin{remark}
	Another way to see that $E^\infty=0$ is estimating
	\begin{eqnarray*}
		\alpha_ie^{-\alpha_i t}\int_{0}^{t} S_i(s)e^{\alpha_i s}\,ds &=& \alpha_ie^{-\alpha_i t}\int_{0}^{t/2} S_i(s)e^{\alpha_i s}\,ds+\alpha_ie^{-\alpha_i t}\int_{t/2}^{t} S_i(s)e^{\alpha_i s}\,ds\\
		&\leq& S_i(0)e^{-\alpha_i t}(e^{\alpha_i t/2}-1) +S_i(t/2)e^{-\alpha_i t}(e^{\alpha_i t}-e^{\alpha_i t/2})\\
		&=&\left(S_i(0)-S_i(t/2)\right)e^{-\alpha_i t/2}+S_i(0)e^{-\alpha_i t}+S_i\left(t/2\right) .
	\end{eqnarray*}
	Therefore
	\[0\leq E_i(t) \leq S_i(t/2)-S_i(t)+ \left(S_i(0)-S_i(t/2)\right)e^{-\alpha_i t/2} +E_i(0)e^{-\alpha_i t}\]
	and the result follows. This also shows that $S(t)\to S^\infty$ no faster than $E(t)\to 0$.
\end{remark}

In the rest of this section we show that the asymptotic values $S_1^\infty$ and $S_2^\infty$ are always positive in the two-group model and we will obtain a formula to estimate these values.

Let us begin with the equation for $S(t)$ in \eqref{Eq.SEIRD}. {Since $S_1(0),S_2(0)>0$, by continuity we can take $\ln S(t)$ at least for $t$ sufficiently small. Thus, we have that}
\[   \frac{d}{dt}(\ln S(t)) = -\B I(t)  \]
where the logarithm is taken componentwise. By adding the first three equations we obtain
\[  \frac{d}{dt}(S(t)+E(t)+I(t)) = -(\Gamma +\M) I(t) . \]
Therefore,
\[\frac{d}{dt}\Big( \B(\Gamma+\M)^{-1} (S(t)+E(t)+I(t)) - \ln S(t) \Big) = 0 .\]
Thus, we obtain that the function $\F$ defined by
\begin{equation}\label{Eq.Fconstante}
\F(t) := \B(\Gamma+\M)^{-1} (S(t)+E(t)+I(t)) - \ln S(t)
\end{equation}
{must be constant equal to $\F(0)$. By continuity, the logarithm above can be taken for all values of $t$, and the fact that $S^\infty$ is finite implies that we can take the limit $t\to +\infty$ and conclude that $S_1^\infty,S_2^\infty > 0$. Taking the limit $t\to\infty$ we also have the relation}
\begin{equation}\label{Eq.Sinfity}
\B(\Gamma+\M)^{-1} S^\infty - \ln S^\infty = \B(\Gamma+\M)^{-1} (S_0+E_0+I_0) - \ln S_0 ,   
\end{equation}
since $I^\infty = 0 = E^\infty$. Denoting  $X_0 := S_0+E_0+I_0$, we can write
\begin{eqnarray}
\B(\Gamma+\M)^{-1} S^\infty - \ln S^\infty &=& \B(\Gamma+\M)^{-1}X_0 - \ln S_0\nonumber \\        
\ln S^\infty &=&\ln S_0 +\B(\Gamma+\M)^{-1}( S^\infty - X_0)\nonumber \\
S^\infty &=&S_0\exp\Big(\B(\Gamma+\M)^{-1}( S^\infty - X_0)\Big), \label{Eq.Sinfty}
\end{eqnarray}
where the exponential is taken componentwise. Equation \eqref{Eq.Sinfty} shows that $S^\infty$ is a fixed point of the map $T:\R^2\to \R^2$ defined by
\begin{equation}\label{def.T}
	T(X)=S_0\exp\Big(\B(\Gamma+\M)^{-1}( X - X_0)\Big)
\end{equation}
or, denoting $X=(x,y)$ and $T(X)=(T_1(x,y),T_2(x,y))$,
\begin{eqnarray}
	T_1(x,y) &=& S_1(0)\exp\left(\frac{\beta_{11}}{\gamma_1+\mu_1}( x - x_0)+\frac{\beta_{12}}{\gamma_2+\mu_2}(y-y_0)\right), \\
	T_2(x,y) &=& S_2(0)\exp\left(\frac{\beta_{21}}{\gamma_1+\mu_1}( x - x_0)+\frac{\beta_{22}}{\gamma_2+\mu_2}(y-y_0)\right),
\end{eqnarray}
with $x_0=S_1(0)+E_1(0)+I_1(0)$ and $y_0=S_2(0)+E_2(0)+I_2(0)$.
Before we continue with the analysis of the epidemic final size, let us define the following notation for partial ordering of vectors in $\R^2$: given $X=(X_1,X_2),Y=(Y_1,Y_2)\in \R^2$, se say that
\[\begin{matrix*}[l]
X\leq Y \Longleftrightarrow X_i \leq Y_i &\mbox{ for all }&i=1,2\,,\\
X< Y \Longleftrightarrow X\leq Y \mbox{ and } X_i < Y_i &\mbox{ for some } &i=1,2\,, \\
X\ll Y\Longleftrightarrow X_i < Y_i &\mbox{ for all } &i=1,2\,.
\end{matrix*}
\]

\begin{theorem}\label{Thm.mapT}
	Let $T:\R^2\to\R^2$ be the map defined in \eqref{def.T} with $X_0=S_0+I_0+E_0$. Then, under the hypotheses \ref{H1-B} and \ref{H2-MN}, we have

	\begin{enumerate}[a)]
		\item\label{Thm.i} $T$ is componentwise increasing
		\item\label{Thm.ii} $T(S_0)=S_0 \Leftrightarrow E_0=0=I_0$
		\item\label{Thm.iii}  $E_0+I_0>0 \Leftrightarrow 0\ll T(0) \ll T(S_0)\ll S_0$
		\item\label{Thm.iv} The derivative of $T$ is componentwise increasing and given by
		\begin{equation}\label{Eq.DT1}
		DT(x,y) = \begin{pmatrix} \frac{\beta_{11}}{\gamma_1+\mu_1}T_1(x,y) & \frac{\beta_{12}}{\gamma_2+\mu_2}T_1(x,y)\\ \frac{\beta_{21}}{\gamma_1+\mu_1}T_2(x,y) & \frac{\beta_{22}}{\gamma_2+\mu_2}T_2(x,y)
		\end{pmatrix}.
		\end{equation}
	\end{enumerate}
	
\end{theorem}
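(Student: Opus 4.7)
The plan is to derive everything from the explicit formula
\[ T(X) = S_0 \exp\bigl(\B(\Gamma+\M)^{-1}(X - X_0)\bigr), \]
with componentwise exponential and componentwise product. Parts (iv) and (i) are then essentially a computation: differentiating each $T_i$ gives $\partial_j T_i = (\beta_{ij}/(\gamma_j+\mu_j))\,T_i(x,y)$, which reproduces the stated Jacobian. By Hypotheses~\ref{H1-B} and~\ref{H2-MN}, each entry of $DT$ is a nonnegative constant times $T_i(x,y)>0$, so $DT\geq 0$, and (i) follows from the componentwise mean value theorem. The entries of $DT$ are themselves componentwise increasing because they are nonnegative multiples of the functions $T_i$, which are increasing by (i).

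For (ii), evaluating at $X = S_0$ gives $T(S_0) = S_0\exp(-v)$, where $v := \B(\Gamma+\M)^{-1}(E_0+I_0)$. Since $S_{0,i}>0$, the equation $T(S_0)=S_0$ is equivalent to $v=0$. Setting $u := (\Gamma+\M)^{-1}(E_0+I_0)\geq 0$, the system $\B u = 0$ reads $\beta_{11}u_1+\beta_{12}u_2 = 0$ and $\beta_{21}u_1+\beta_{22}u_2 = 0$; because $\beta_{12},\beta_{21}>0$ and $u_1,u_2\geq 0$, each equation is a sum of nonnegative terms and the off-diagonal ones force $u_2 = u_1 = 0$. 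Thus $E_0+I_0 = 0$, and since both vectors are nonnegative, $E_0 = I_0 = 0$; the converse is immediate.

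For (iii), the reverse implication follows by contrapositive from (ii). For the forward direction, $T(0)\gg 0$ is immediate from $S_0\gg 0$ and the positivity of the exponential. The inequality $T(0)\ll T(S_0)$ comes from strict coordinatewise monotonicity: by the formula in (iv), $T_1$ has a strictly positive partial derivative in $y$ (because $\beta_{12}>0$) and $T_2$ has a strictly positive partial derivative in $x$ (because $\beta_{21}>0$), so moving from $0$ to $S_0\gg 0$ strictly increases each $T_i$. The main obstacle is $T(S_0)\ll S_0$, which is equivalent to $v\gg 0$: when both components of $E_0+I_0$ are strictly positive, off-diagonal positivity of $\B$ and diagonal positivity of $(\Gamma+\M)^{-1}$ immediately give $v_1 \geq \beta_{12}u_2/(\gamma_2+\mu_2) > 0$ and $v_2 \geq \beta_{21}u_1/(\gamma_1+\mu_1) > 0$. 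The delicate case is when $E_0+I_0$ has exactly one positive component and the matching diagonal entry of $\B$ vanishes; to close that gap cleanly one would need a more careful irreducibility argument (for instance, exploiting that $(I+\B(\Gamma+\M)^{-1})^{n-1}$ is entrywise positive) or a reading of the hypothesis that excludes this edge configuration. That is the step I expect to require the most attention.
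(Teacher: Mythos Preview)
Your approach is essentially the paper's own: the paper's proof is extremely terse (each part is dispatched in one sentence, with (\ref{Thm.iii}) simply asserted to ``follow from points (\ref{Thm.i}) and (\ref{Thm.ii})''), and you have correctly supplied the computations behind it.

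The concern you flag at the end of (\ref{Thm.iii}) is not merely a delicate step awaiting a sharper irreducibility argument---it is a genuine gap in the statement as written, which the paper's one-line proof glosses over. If, say, $E_0+I_0=(a,0)$ with $a>0$ and $\beta_{11}=0$ (which Hypothesis~\ref{H1-B} permits, since only $\beta_{12},\beta_{21}>0$ are forced), then with $u=(\Gamma+\M)^{-1}(E_0+I_0)$ one gets $v_1=\beta_{11}u_1+\beta_{12}u_2=0$, hence $T_1(S_0)=S_{0,1}$ and $T(S_0)\ll S_0$ \emph{fails}. No irreducibility device will repair this, because $v_1$ is exactly zero. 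What \emph{does} follow from $v\geq 0$ together with (\ref{Thm.ii}) is the weaker conclusion $T(S_0)<S_0$ in the paper's partial-order sense, and that is in fact all the paper's proof of (\ref{Thm.iii}) actually establishes. Fortunately this weaker inequality is also all that is used downstream: in the proof of Theorem~\ref{Theo.FinalSize} the contradiction obtained is $T(S_0)\geq S_0$, which already clashes with $T(S_0)<S_0$. So the defect lies in the formulation of (\ref{Thm.iii}), not in the global argument, and your instinct that this edge case is the crux was exactly right.
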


\begin{proof}

The point \ref{Thm.i}) follows from hypotheses \ref{H1-B} and \ref{H2-MN}.

For point \ref{Thm.ii}), note that $T(S_0)=S_0$ is equivalent to
\[\frac{\beta_{i1}}{\gamma_1+\mu_1}(E_1(0)+I_1(0))+\frac{\beta_{i2}}{\gamma_2+\mu_2}(E_2(0)+I_2(0))=0\,\qquad i=1,2. \]
Again by hypotheses \ref{H1-B} and \ref{H2-MN}, this is true only when $E(0)=0=I(0)$, since these functions are nonnegative.

Point \ref{Thm.iii}) follows from points \ref{Thm.i}) and \ref{Thm.ii}), and point \ref{Thm.iv}) follows from the definition of $T$.

\end{proof}

It follows from the {points} \eqref{Thm.i} and \eqref{Thm.iii} in the theorem above that, for every $n\in \mathbb{N}$
\[    0\ll T(0) \ll \cdots \ll T^n(0)\ll \cdots \ll T^n(S_0) \ll \cdots \ll T(S_0)\ll S_0.  \]

Therefore, there exist the limits
\[ S^- = \lim_{n\to\infty} T^n(0) ,\;\;\; S^+ = \lim_{n\to\infty} T^n(S_0), \]
{they satisfy $S^-\leq S^+$  and are fixed points of $T$ on $[0,S_1(0)]\times[0,S_2(0)] $, since $T$ is continuous. Note that, due to the monotonicity of $T$, there is no other fixed points of $T$ in $[0,S_1(0)]\times[0,S_2(0)] \backslash [S_1^-,S_1^+]\times[S_2^-,S_2^+]$. Thus, the limit value $S^\infty$ must satisfy $S^-\leq S^\infty\leq S^+$.}

\begin{theorem}[Final size]\label{Theo.FinalSize}
	Let us assume that \eqref{H1-B} and \eqref{H2-MN} hold. If $S_0\gg 0$ and $E_0+I_0>0$, then the final {state} of the epidemic model is
	\begin{enumerate}[i)]
		\item \label{item.i} $0\neq S^\infty = \lim_{n\to \infty} T^n(0)$
		\item\label{item.ii} $I^\infty =0 = E^\infty$
		\item\label{item.iii} $R^\infty + D^\infty = N-S^\infty$
		\item\label{item.iv} $R^\infty =\Gamma (\Gamma+\M)^{-1}( N-S^\infty-R_0-D_0)+R_0$
		\item\label{item.v} $D^\infty =\M (\Gamma+\M)^{-1}( N-S^\infty-R_0-D_0)+D_0$
	\end{enumerate}
\end{theorem}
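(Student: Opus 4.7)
Part \ref{item.ii} is already contained in the lemma immediately preceding the theorem statement. For \ref{item.iii}, I would take $t\to\infty$ in the conservation identity \eqref{Eq.NumIndividuals} and invoke \ref{item.ii}. For \ref{item.iv} and \ref{item.v}, Hypothesis \ref{H2-MN} ensures that $\Gamma+\M$ is an invertible diagonal matrix with strictly positive entries; adding the $R$ and $D$ equations in \eqref{Eq.SEIRD} yields $(R+D)'=(\Gamma+\M)I$, so $R'=\Gamma(\Gamma+\M)^{-1}(R+D)'$. Integrating on $[0,\infty)$ and substituting \ref{item.iii} produces the formula for $R^\infty$; the formula for $D^\infty$ then follows by the symmetric calculation (or by subtraction from \ref{item.iii}).

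For part \ref{item.i}, the existence of $S^\infty$ has already been established, and the argument preceding the theorem---namely, constancy of $\F$ from \eqref{Eq.Fconstante} together with $E^\infty=I^\infty=0$---delivers the fixed-point equation $T(S^\infty)=S^\infty$. Componentwise positivity $S^\infty\gg 0$ (in particular $S^\infty\neq 0$) follows from the sandwich recorded in the text: since $T$ is componentwise increasing by Theorem \ref{Thm.mapT}\ref{Thm.i}, the iterates $T^n(0)$ are nondecreasing, hence $S^\infty\geq S^-\geq T(0)=S_0\exp(-\B(\Gamma+\M)^{-1}X_0)\gg 0$.

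The remaining and principal task is to identify $S^\infty$ with $S^-=\lim_n T^n(0)$. Given the sandwich $S^-\leq S^\infty\leq S^+$, this reduces to proving uniqueness of the fixed points of $T$ inside $[0,S_0)$. My plan: for two such fixed points $X\leq Y$, set $u=\ln Y-\ln X\geq 0$ so the fixed-point identity reads $u=\B(\Gamma+\M)^{-1}(Y-X)$; combined with the strict convexity bound $Y_j-X_j=X_j(e^{u_j}-1)\geq X_j u_j$ (with equality iff $u_j=0$), this yields $u\geq Au$ where $A=\B(\Gamma+\M)^{-1}\diag(X)$ is nonnegative and, by Hypothesis \ref{H1-B}, irreducible. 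The two cases in which exactly one component of $u$ vanishes are excluded at once by $\beta_{12},\beta_{21}>0$; in the remaining regime $u\gg 0$ the strict inequality $e^{u_j}-1>u_j$ promotes $u\geq Au$ to $u\gg Au$, which I would rule out via a Perron--Frobenius argument on the irreducible nonnegative matrix $A$ (testing against its positive left eigenvector). I anticipate this uniqueness step to be the chief obstacle of the proof. Should the direct algebraic route prove recalcitrant, I would fall back on the componentwise convexity of $T$ recorded in Theorem \ref{Thm.mapT}\ref{Thm.iv}, combined with the ODE identity $\int_0^\infty I(s)\,ds=(\Gamma+\M)^{-1}(N-S^\infty-R_0-D_0)$, to match the dynamical final state against the iteration of $T$ directly.
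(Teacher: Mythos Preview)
Your treatment of parts \ref{item.ii}--\ref{item.v} is correct and matches the paper's (your identity $R'=\Gamma(\Gamma+\M)^{-1}(R+D)'$ is in fact cleaner than the paper's $R'+D'=(\Gamma+\M)\Gamma^{-1}R'$, since Hypothesis~\ref{H2-MN} does not guarantee $\Gamma$ is invertible).

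The genuine gap is in your uniqueness argument for \ref{item.i}. From two fixed points $X\leq Y$ you correctly obtain $u=\ln Y-\ln X=\B(\Gamma+\M)^{-1}(Y-X)$, and the bound $Y_j-X_j=X_j(e^{u_j}-1)>X_ju_j$ (for $u_j>0$) then yields $u\gg Au$ with $A=\B(\Gamma+\M)^{-1}\diag(X)$. But testing this against the positive left Perron eigenvector $W$ of $A$ gives only $W^Tu>\lambda\,W^Tu$, hence $\lambda<1$. That is information, not a contradiction: it merely says $DT$ has spectral radius below $1$ at the \emph{lower} fixed point, which is perfectly compatible with $X$ being attracting and a second fixed point $Y$ sitting above it. Notice, too, that your argument never invokes $E_0+I_0>0$; without that hypothesis $S_0$ itself is a second fixed point, so any purported uniqueness proof that ignores it cannot succeed.

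The paper closes the loop by working at the \emph{upper} fixed point and then comparing with $S_0$. From the mean-value identity and the componentwise monotonicity of $DT$ (Theorem~\ref{Thm.mapT}\ref{Thm.iv}) one gets
\[
S^+-S^-=\int_0^1 DT\bigl(S^-+\tau(S^+-S^-)\bigr)(S^+-S^-)\,d\tau\ \leq\ DT(S^+)(S^+-S^-),
\]
so the Perron eigenvalue of $DT(S^+)$ satisfies $\lambda\geq 1$. The same estimate applied on $[S^+,S_0]$ gives $T(S_0)-S^+\geq DT(S^+)(S_0-S^+)$, and pairing with the left Perron eigenvector forces $W^TT(S_0)\geq W^TS_0$. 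This contradicts $T(S_0)\ll S_0$, which is exactly where the hypothesis $E_0+I_0>0$ enters via Theorem~\ref{Thm.mapT}\ref{Thm.iii}. Your fallback through the convexity recorded in Theorem~\ref{Thm.mapT}\ref{Thm.iv} is the right instinct, but the comparison with $S_0$ is the missing ingredient; had you run your inequality with $Y$ in place of $X$ (using $Y_j-X_j=Y_j(1-e^{-u_j})<Y_ju_j$) you would have obtained $\rho(DT(Y))>1$ and could then finish exactly as the paper does.
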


\begin{proof}
	{For item \ref{item.i}), let us show that under Hypotheses \ref{H1-B} and \ref{H2-MN}, we have $S^-=S^+$, i.e., the map $T$ has only one fixed point.}
	
	Let us suppose that $S^-<S^+$. Then, by the Taylor's Theorem we can write
	\begin{eqnarray*}
		S^+-S^- = T(S^+)-T(S^-) = \int_0^1 DT(S^-+\tau(S^+-S^-))(S^+-S^-)\;d\tau .
	\end{eqnarray*}
	From \eqref{Eq.DT1}, { we see that $DT$ is componentwise increasing by} the monotonicity of $T$. It follows that $DT(S^-+\tau(S^+-S^-))\leq DT(S^+)$ componentwise for all $\tau\in [0,1]$. Therefore, we obtain
	\begin{eqnarray}\label{Eq.S+-S-}
	S^+-S^- \leq DT(S^+)(S^+-S^-).
	\end{eqnarray}
	Since $DT(S^+)= \diag(S^+)\B(\Gamma+\M)^{-1}$, we have from Hypothesis \ref{H1-B} and \ref{H2-MN} that $DT(S^+)$ is nonnegative irreducible. Hence, by the Perron-Frobenius Theorem, let $\lambda >0$ be the dominant eigenvalue and $W\gg 0$ an associated left eigenvector. From \eqref{Eq.S+-S-} we have
	\begin{eqnarray}\label{Eq.S+-S-2}
	W^T(S^+-S^-) \leq W^T DT(S^+)(S^+-S^-) = \lambda W^T (S^+-S^-),
	\end{eqnarray}
	and thus, $\lambda \maig 1$.
	By the definition of $S^+$, we have $S_0\maig S^+$ and then $DT(S^++\tau(S_0-S^+))\maig DT(S^+)$ componentwise for all $\tau\maig 0$. Therefore,
	\begin{eqnarray*}
		T(S_0)-S^+ &=& T(S_0)-T(S^+) = \int_0^1 DT(S^++\tau(S_0-S^+))(S_0-S^+)\;d\tau \\
		&\maig& DT(S^+)(S_0-S^+).
	\end{eqnarray*}
	Multiplying the inequality above by  $W^T$,
	\begin{eqnarray*}
		W^T (T(S_0)-S^+) &\maig& W^T DT(S^+))(S_0-S^+) = \lambda W^T (S_0-S^+) \\&\maig& W^T (S_0-S^+) 
	\end{eqnarray*}
	and, since $W\gg0$ this implies that $T(S_0)\maig S_0	$, which cannot happen if $E_0+I_0>0$ by item \eqref{Thm.iii} of Theorem \ref{Thm.mapT}.
	
	The item \ref{item.ii}) was already proved.
	
	Item \ref{item.iii}) follows from item \ref{item.ii}) and \eqref{Eq.NumIndividuals}.
	
	For the item \ref{item.iv}), we can use the equations for $R$ and $D$ in \eqref{Eq.SEIRD} and obtain
	\[	R'+D'=(\Gamma+\M)I = (\Gamma+\M)\Gamma^{-1}\Gamma I = (\Gamma+\M)\Gamma^{-1}R',\]
	which implies that
	\[R(t)+D(t)-R_0-D_0=(\Gamma+\M)\Gamma^{-1}(R(t)-R_0), \qquad \forall \; t>0.\]
	Therefore, as $t\to+\infty$ we conclude that
	\[R^\infty+D^\infty-R_0-D_0=(\Gamma+\M)\Gamma^{-1}(R^\infty-R_0).\]
	Using item \ref{item.iii}), we can write
	\[N-S^\infty-R_0-D_0=(\Gamma+\M)\Gamma^{-1}(R^\infty-R_0)\]
	and thus
	\[R^\infty=\Gamma(\Gamma+\M)^{-1}(N-S^\infty-R_0-D_0)+R_0.\]
	Finally, item \ref{item.v}) follows from items \ref{item.iii}) and \ref{item.iv}).
\end{proof}

\begin{remark}
	It follows from items \ref{Thm.iii}) and \ref{Thm.iv}) that the final size of recovered and deceased classes are $R^\infty_i=\frac{\gamma_i}{\gamma_i+\mu_i}(N_i-S^\infty_i-R_i(0)-D_i(0))+R_i(0)$ and $D^\infty_i=\frac{\mu_i}{\gamma_i+\mu_i}(N_i-S^\infty_i-R_i(0)-D_i(0))+D_i(0)$ for each group $i$.
\end{remark}

\begin{remark}
	It follows from item \ref{item.i}) that the final size $S^\infty$ can also be estimated by using {only the map $T$, instead of solving numerically the system \eqref{Eq.SEIRD}. }
\end{remark}


\section{Distance estimates for the second group}\label{Section.Estimates}

 {In this section we study the following problem: given two sets of parameters, $\widetilde{\A}$, $\widetilde{\B}$, $\widetilde{\Gamma}$, $\widetilde{\M}$ and $\A, \B $, $\Gamma$, $\M$, for the system \eqref{Eq.SEIRD}, let $\widetilde{S}, \widetilde{E}, \widetilde{I}, \widetilde{R}, \widetilde{D}$ and $S, E, I, R, D$ be their respective set of solutions, we want to estimate the distance related to the solutions to second group $\norm{(\widetilde{S}_2,\widetilde{E}_2,\widetilde{I}_2,\widetilde{R}_2)-(S_2,E_2,I_2,R_2)}$ based on the the distance $\norm{(\widetilde{S}_1,\widetilde{E}_1,\widetilde{I}_1,\widetilde{R}_1,\widetilde{D}_1,\widetilde{D}_2)-(S_1,E_1,I_1,R_1,D_1,D_2)}$, i.e.,
based on the distances between solutions of the first group plus on the distance between the evolution of deaths in group 2 too.}

We start with the following lemma on the ratio of susceptible class:

\begin{lemma}\label{S.ratio}The solutions $S_1(t)$ and $S_2(t)$ of susceptible individuals satisfy
\[\displaystyle \frac{S_1(t)}{S_2(t)}= \frac{S_1(0)}{S_2(0)}\exp\left\{ \int_0^t\left(\left(\beta_{21}-\beta_{11}\right)I_{1}(s)+\left(\beta_{22}-\beta_{12}\right)I_{2}(s)\right)\,ds \right\}.\]
\end{lemma}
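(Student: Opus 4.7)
The plan is to work directly with the logarithmic derivative of the ratio $S_1(t)/S_2(t)$. Since we are given that $S_1(0),S_2(0)>0$, continuity together with the first equation of \eqref{Eq.SEIRDcomponentwise} guarantees that both $S_1(t)$ and $S_2(t)$ remain strictly positive for all $t\geq 0$ (the equations $S_i'=-(\cdots)S_i$ force $S_i$ to be an exponential of an integral, hence never zero). This justifies taking logarithms and dividing freely.

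First I would rewrite the first-row equations in \eqref{Eq.SEIRDcomponentwise} as
\[
\frac{d}{dt}\ln S_1(t) = -\beta_{11}I_1(t)-\beta_{12}I_2(t),\qquad \frac{d}{dt}\ln S_2(t) = -\beta_{21}I_1(t)-\beta_{22}I_2(t).
\]
Subtracting the second from the first yields
\[
\frac{d}{dt}\ln\!\left(\frac{S_1(t)}{S_2(t)}\right) = (\beta_{21}-\beta_{11})I_1(t)+(\beta_{22}-\beta_{12})I_2(t).
\]
Next I would integrate this identity from $0$ to $t$, obtaining
\[
\ln\!\left(\frac{S_1(t)}{S_2(t)}\right) - \ln\!\left(\frac{S_1(0)}{S_2(0)}\right) = \int_0^t\!\left((\beta_{21}-\beta_{11})I_1(s)+(\beta_{22}-\beta_{12})I_2(s)\right)ds,
\]
and finally exponentiate to arrive at the stated formula.

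There is essentially no obstacle: the computation is a standard separation-of-variables manipulation. The only mild point worth checking is the positivity of $S_1$ and $S_2$ on $[0,T]$, which, as noted above, follows immediately from the structure of the $S$-equation and the assumption $S_1(0),S_2(0)>0$ (implicit in the setting of the previous section).
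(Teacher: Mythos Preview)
Your proof is correct and follows essentially the same approach as the paper: the paper computes $\left[\dfrac{S_1}{S_2}\right]'$ via the quotient rule to obtain the same first-order linear ODE for the ratio and then integrates, which is exactly your logarithmic-derivative computation written slightly differently. Your added remark on the positivity of $S_1,S_2$ is a nice touch that the paper leaves implicit here.
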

\begin{proof} From the equations for $S_1$ and $S_2$ in \eqref{Eq.SEIRDcomponentwise}, we have
	\begin{eqnarray*}
		\displaystyle\left[\frac{S_1}{S_2}\right]'&=&\frac{(\beta_{21}I_1+\beta_{22}I_2)S_1S_2-(\beta_{11}I_1+\beta_{12}I_2)S_1S_2}{S_2^2}\\
		&=&\big(\left(\beta_{21}-\beta_{11}\right)I_{1}+\left(\beta_{22}-\beta_{12}\right)I_{2}\big)\frac{S_1}{S_2}.
	\end{eqnarray*}
	and thus, the result follows.
\end{proof}

The following two elementary results will be useful for the estimates in Theorem \ref{Theo.Estim}, therefore we include them with proofs.

\begin{remark}\label{Remark.f(e)} Let $a,b>0$ and $f:[0,+\infty)\to \R$ defined by
	\[ f(\varepsilon) = (1-b\varepsilon)e^{-a\varepsilon} .\]
	Then, {the Mean Value Theorem implies that, for every $\varepsilon >0$, there holds}
	\[\left|1-(1-b\varepsilon)e^{-a\varepsilon} \right| \leq \max_{\xi\in[0,\varepsilon]}|f'(\xi)| \varepsilon= |f'(0)|\varepsilon = (a+b) \varepsilon .        \]
\end{remark}

\begin{lemma}\label{Remark.g(s)} 
	Let $a,b>0$ and $g:[0,T]\to \R$ defined by
	\[ g(s) = ae^{a(s-T)}-be^{b(s-T)}. \]
	Then there exists at most one $\tau\in[0,T)$ such that $g(\tau)=0$. Also, if  {$h:[0,T]\to\R_+$} is nonincreasing, then
	\begin{equation}
	\int_0^Tg(s)h(s)\,ds \leq \left\{  \begin{matrix}
	h(0)(e^{-bT}-e^{-aT}),\hfill & \mbox{if } 0<b<a \\ 
	h(0)(e^{-bT}-e^{-aT})+(h(0)-h(T))\ln\left(\frac{b}{a}\right), & \mbox{if } 0<a<b 
	\end{matrix} \right.  .
	\end{equation}
\end{lemma}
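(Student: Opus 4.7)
The plan is to handle the two claims separately: the uniqueness of the zero by a direct algebraic manipulation, and the integral inequality by integrating $g$ by parts and then controlling its primitive $\phi$ pointwise.

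For the zero count, I would rewrite $g(\tau)=0$ as $e^{(a-b)(\tau-T)}=b/a$. Assuming $a\neq b$, taking logarithms gives the unique candidate
\[\tau_{\ast}=T+\frac{\ln(b/a)}{a-b}.\]
A sign check in the two sub-cases shows that $\ln(b/a)$ and $a-b$ always have opposite signs, so $\tau_{\ast}<T$ automatically (whether $\tau_{\ast}\in[0,T)$ depends only on the size of $T$); in any event there is at most one such zero. The degenerate case $a=b$ yields $g\equiv 0$ and is excluded by the two branches of the subsequent inequality, which both assume $a\neq b$.

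For the integral inequality, I would introduce the primitive $\phi(s):=e^{a(s-T)}-e^{b(s-T)}$, so that $\phi'=g$, $\phi(T)=0$, and $\phi(0)=e^{-aT}-e^{-bT}$. Integration by parts (first for $h\in C^{1}$, then extended to a general nonincreasing $h$ by approximation, or equivalently by using Riemann--Stieltjes with the positive measure $-dh$) gives
\[\int_{0}^{T}g(s)h(s)\,ds\;=\;h(0)\bigl(e^{-bT}-e^{-aT}\bigr)+\int_{0}^{T}\phi(s)\bigl(-h'(s)\bigr)\,ds,\]
with $-h'\geq 0$. A pointwise control on $\phi$ over $[0,T]$ now suffices, and the useful form comes from the parameter-integral representation
\[\phi(s)=\int_{a}^{b}(T-s)\,e^{-\xi(T-s)}\,d\xi,\]
obtained by integrating the identity $\partial_{\xi}e^{\xi(s-T)}=(s-T)e^{\xi(s-T)}$ in $\xi\in[a,b]$.

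In the first case $0<b<a$, reversing the limits shows $\phi\leq 0$ on $[0,T]$, so the extra integral is nonpositive and the first bound follows at once. In the second case $0<a<b$, $\phi\geq 0$, and the elementary estimate $ue^{-\xi u}\leq 1/\xi$ (valid for $u\geq 0$, $\xi>0$, since $\sup_{x\geq 0}xe^{-x}=1/e\leq 1$) yields $\phi(s)\leq \int_{a}^{b}d\xi/\xi=\ln(b/a)$; hence the extra integral is dominated by $\ln(b/a)(h(0)-h(T))$, which produces the second bound. The main obstacle I anticipate is precisely spotting the parameter-integral representation of $\phi$: once it is in hand, the otherwise mysterious factor $\ln(b/a)$ of the lemma materializes as $\int_{a}^{b}d\xi/\xi$ and the pointwise inequality $ue^{-\xi u}\leq 1/\xi$ closes the argument; without this representation one would be pushed into analyzing $\phi$ through its unique critical point, which leads to a considerably more awkward comparison.
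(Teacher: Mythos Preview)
Your proof is correct and takes a genuinely different route from the paper's. The paper locates the unique zero $\tau$ of $g$, splits $\int_0^T gh$ at $\tau$, and on each piece replaces $h$ by an extreme value ($h(\tau)$, $h(0)$, or $h(T)$) so that only $\int g$ remains; in the case $a<b$ it then evaluates the resulting terms at the explicit $\tau=T-\frac{1}{b-a}\ln(b/a)$ and applies the mean value theorem to $e^{a(\tau-T)}-e^{b(\tau-T)}$ to produce the $\ln(b/a)$ factor. Your approach instead integrates by parts to isolate the boundary contribution $h(0)(e^{-bT}-e^{-aT})$ and reduces everything to a pointwise bound on the primitive $\phi$; the parameter representation $\phi(s)=\int_a^b (T-s)e^{-\xi(T-s)}\,d\xi$ together with $ue^{-\xi u}\le 1/\xi$ then yields $\phi\le\ln(b/a)$ directly. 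Your argument is cleaner in that it handles both sign cases uniformly, makes the appearance of $\ln(b/a)=\int_a^b d\xi/\xi$ transparent, and never needs to assume $\tau\in[0,T)$ (a case the paper's write-up leaves implicit when $a<b$). The paper's approach, in exchange, stays entirely within elementary calculus and avoids the Stieltjes/approximation step you need for merely nonincreasing $h$. As a bonus, your method actually gives the sharper constant $\frac{1}{e}\ln(b/a)$ in the second branch, since $\sup_{u\ge 0}ue^{-\xi u}=1/(e\xi)$.
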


\begin{proof}
	To see this, note that $g(\tau)=0$ if and only if $\tau=T-\frac{1}{a-b}\ln\left(\frac{a}{b}\right)\in[0,T)$, so it changes sign only once. 
	If $0<b<a$, we have $g(T)=a-b>0$ and then $g$ nonnegative on $[0,T]$ or it changes sign at $s=\tau$. In this case, for any  $h$ nonincreasing nonnegative function we can write 
	\[
	\int_0^T\!g(s)h(s)ds = \int_0^\tau \!g(s)h(s)\,ds+\int_{\tau}^T\!g(s)h(s)\,ds \leq h(\tau)\int_0^\tau\! g(s)\,ds+h(\tau)\int_{\tau}^T\!g(s)\,ds
	\]
	and the estimate follows. In the case where $g$ is nonnegative, the estimate is immediate.
	
	If $0<a<b$, we have $g(T)=a-b<0$ and then
	\begin{eqnarray*}
		\int_0^Tg(s)h(s)\,ds &=& \int_0^\tau g(s)h(s)\,ds+\int_{\tau}^Tg(s)h(s)\,ds \\
		&\leq& h(0)\left(e^{a(\tau-T)}-e^{-aT}-e^{b(\tau-T)}+e^{-bT}\right)\!+\!h(T)\left(e^{b(\tau-T)}\!-\!e^{a(\tau-T)}\right)\\
		&=& h(0)\left(e^{-bT}-e^{-aT}\right)+(h(0)-h(T))\left(e^{a(\tau-T)}-e^{b(\tau-T)}\right).
	\end{eqnarray*}
	By the Mean Value Theorem, there exists a $a<\xi<b$ satisfying
	\begin{eqnarray*}
		e^{a(\tau-T)}-e^{b(\tau-T)}&=&(T-\tau)(b-a)e^{\xi(\tau-T)}\leq (T-\tau)(b-a)e^{a(\tau-T)}\\
		&=&\ln\left(\frac{b}{a}\right)\left(\frac{a}{b}\right)^{\frac{a}{b-a}} \leq \ln\left(\frac{b}{a}\right),
	\end{eqnarray*}
	where for the last equality we used {that $T-\tau=\frac{1}{a-b}\ln\left(\frac{a}{b}\right)$.}
\end{proof}

For the main result of this section, we will denote the norm $L^\infty(0,T)$ by $\norm{\,.\,}$.

\begin{theorem}\label{Theo.Estim} 
	Let $(\widetilde{S},\widetilde{E},\widetilde{I},\widetilde{R},\widetilde{D})$ and $(S,E,I,R,D)$ be two solutions for (\ref{Eq.SEIRD}) corresponding to two sets of parameters $\widetilde{\A}$, $\widetilde{\B}$, $\widetilde{\Gamma}$, $\widetilde{\M}$ and $\A, \B $, $\Gamma$, $\M$ respectively, with the same initial conditions. If 
	\begin{equation}\label{Eq.group1}
		\norm{(\widetilde{S}_1,\widetilde{E}_1,\widetilde{I}_1,\widetilde{R}_1,\widetilde{D}_1)-(S_1,E_1,I_1,R_1,D_1)}\leq\varepsilon
	\end{equation}
	and
	\begin{multicols}{2}
		\begin{enumerate}[i)]
			\item\label{Hyp.i}$\frac{\widetilde{\beta}_{2i}-\widetilde{\beta}_{1i}}{\widetilde{\mu}_i}=\frac{\beta_{2i}-\beta_{1i}}{\mu_i}, \mbox{ for }i=1,2;$
			\item\label{Hyp.ii}$\frac{\widetilde{\alpha}_1}{\widetilde{\alpha}_2}=\frac{\alpha_1}{\alpha_2};$
			\item\label{Hyp.iii}$\widetilde{\gamma}_1-\widetilde{\gamma}_2=\gamma_1-\gamma_2;$
			\item\label{Hyp.iv}$\widetilde{\mu}_1-\widetilde{\mu}_2=\mu_1-\mu_2;$
			\item\label{Hyp.v} $\norm{D_2-\widetilde{D}_2}<\varepsilon$
		\end{enumerate}
	\end{multicols}
	then 
	\[\norm{(\widetilde{S}_2,\widetilde{E}_2,\widetilde{I}_2,\widetilde{R}_2)-(S_2,E_2,I_2,R_2)}<K(\varepsilon)\varepsilon,\]
	where $K(\varepsilon)$ is bounded as $\varepsilon \to 0$.  Moreover, if $\widetilde{\beta}_{22}-\widetilde{\beta}_{12}=\beta_{22}-\beta_{12}$, then the {hypothesis} \eqref{Hyp.v} is not necessary.
\end{theorem}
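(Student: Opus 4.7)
The plan is to first establish $\norm{\widetilde{S}_2 - S_2} = O(\varepsilon)$ via the ratio formula of Lemma \ref{S.ratio}, and then propagate this closeness to $E_2, I_2, R_2$ using the explicit representations of Lemma \ref{explicitLemma}. The key observation behind the first step is that, by hypothesis \ref{Hyp.i} combined with the identity
\[
\int_0^t(\beta_{2i}-\beta_{1i})\,I_i(s)\,ds \;=\; \frac{\beta_{2i}-\beta_{1i}}{\mu_i}\bigl(D_i(t)-D_i(0)\bigr),
\]
which follows from $D_i' = \mu_i I_i$, the integral inside the exponential in Lemma \ref{S.ratio} can be rewritten as a closed-form expression in $D_1(t), D_2(t)$ whose coefficients $c_i := (\beta_{2i}-\beta_{1i})/\mu_i$ coincide with their tilde counterparts. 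Solving for $S_2$ and $\widetilde{S}_2$ and subtracting, then using \eqref{Eq.group1} to bound $\widetilde{S}_1-S_1$ and $\widetilde{D}_1-D_1$ and hypothesis \ref{Hyp.v} to bound $\widetilde{D}_2-D_2$, yields $\norm{\widetilde{S}_2 - S_2}\leq K_1\varepsilon$ with $K_1$ depending only on the initial data and on the common coefficients $c_1, c_2$.

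To propagate this closeness to $E_2, I_2, R_2$, one first extracts $O(\varepsilon)$-closeness of the group-2 rate constants from the group-1 trajectory closeness. Integrating $R_1' = \gamma_1 I_1$ and $D_1' = \mu_1 I_1$ on $[0,T]$ and comparing with the tilde versions, together with \eqref{Eq.group1}, gives $|\gamma_1-\widetilde\gamma_1|, |\mu_1-\widetilde\mu_1|=O(\varepsilon)$ provided $\int_0^T I_1\,ds$ is bounded below (the non-degenerate case $I_1\not\equiv 0$). An analogous use of the equation $I_1' = \alpha_1 E_1 - (\gamma_1+\mu_1)I_1$ yields $|\alpha_1-\widetilde\alpha_1|=O(\varepsilon)$. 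Hypotheses \ref{Hyp.ii}, \ref{Hyp.iii}, \ref{Hyp.iv} then transfer these bounds to $\alpha_2, \gamma_2, \mu_2$. Armed with $S_2\approx\widetilde{S}_2$ and $O(\varepsilon)$-close group-2 rate constants, the representations of Lemma \ref{explicitLemma} express $E_2$ and $I_2$ as exponentially weighted linear functionals of $S_2$ and $E_2$ respectively; differences of the form $e^{-at}-e^{-\widetilde{a}t}$ with $|a-\widetilde a|=O(\varepsilon)$ are controlled uniformly on $[0,T]$ by Remark \ref{Remark.f(e)}, and differences of the corresponding weighted integrals by Remark \ref{Remark.g(s)}, producing $\norm{\widetilde{E}_2-E_2}, \norm{\widetilde{I}_2-I_2} = O(\varepsilon)$. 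Finally, integrating $R_2' = \gamma_2 I_2$ and comparing gives $\norm{\widetilde{R}_2 - R_2} = O(\varepsilon)$, and collecting all constants yields $K(\varepsilon)$ bounded as $\varepsilon\to 0$.

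The main obstacle will be the careful bookkeeping of the many exponential estimates so that $K(\varepsilon)$ does not blow up as $\varepsilon\to 0$; this is precisely the role of Remarks \ref{Remark.f(e)} and \ref{Remark.g(s)}, which quantify how small perturbations of the exponents generate only $O(\varepsilon)$-sized errors in $L^\infty([0,T])$. For the "moreover" statement, the additional assumption $\widetilde\beta_{22}-\widetilde\beta_{12}=\beta_{22}-\beta_{12}$ combined with \ref{Hyp.i} for $i=2$ either forces this common difference to vanish (so that the $I_2$-term drops out of the ratio in Lemma \ref{S.ratio} and $\widetilde{D}_2$ plays no role in the estimate for $\widetilde{S}_2-S_2$), or forces $\mu_2=\widetilde\mu_2$; in the latter case, $\widetilde{D}_2-D_2=\mu_2\int_0^t(\widetilde{I}_2-I_2)\,ds$ allows one to bootstrap $\widetilde{D}_2\approx D_2$ from $\widetilde{I}_2\approx I_2$ by a Gr\"onwall-type closure of the estimates above, so that hypothesis \ref{Hyp.v} is no longer required.
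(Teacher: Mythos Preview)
Your proposal is correct and follows essentially the same approach as the paper: both obtain the $S_2$-estimate from the ratio formula of Lemma \ref{S.ratio} rewritten in terms of $D_1,D_2$ via hypothesis \ref{Hyp.i}, then propagate through the explicit representations of Lemma \ref{explicitLemma} using Remarks \ref{Remark.f(e)} and \ref{Remark.g(s)}, while transferring the $O(\varepsilon)$-closeness of the rate constants from group 1 to group 2 via hypotheses \ref{Hyp.ii}--\ref{Hyp.iv}. The only noteworthy variations are cosmetic: the paper bounds $|\alpha_1-\widetilde\alpha_1|$ through the integrated identity \eqref{Eq.IntegralE} (coming from $S_1'+E_1'=-\alpha_1E_1$) rather than through the $I_1'$-equation as you suggest, and it obtains the $R_2$-bound from the conservation law \eqref{Eq.NumIndividuals} rather than by integrating $R_2'=\gamma_2 I_2$; both alternatives work equally well.
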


\begin{proof} For simplicity let us define $V_i=\frac{\widetilde{\beta}_{2i}-\widetilde{\beta}_{1i}}{\widetilde{\mu}_i}=\frac{\beta_{2i}-\beta_{1i}}{\mu_i}, \mbox{ for }i=1,2$. 
	
	\noindent {\bf Step 1:} For the susceptible part, by Lemma \ref{S.ratio} and the equation for $D_1$ and $D_2$ we have
	\[\displaystyle \frac{\widetilde{S}_2}{S_2}=\displaystyle\frac{\widetilde{S}_1}{S_1} \exp\{ V_1(D_{1}(t)-\widetilde{D}_1(t))+V_2(D_2(t)-\widetilde{D}_2(t))\}.\]
	{Firstly, let us assume that for a given $t>0$ we have $\widetilde{S}_2(t)\leq S_2(t)$. Since $|S_1(t)-\widetilde{S}_1(t)|<\varepsilon$ by hypothesis, we obtain}
	\begin{eqnarray}
	0 &\leq& S_2(t)-\widetilde{S}_2(t)\leq {S}_2(t)\left(1-\frac{\widetilde{S}_2(t)}{S_2(t)}\right) \nonumber  \\&=& {S}_2(t)\left(1-\frac{\widetilde{S}_1(t)}{S_1(t)}e^{V_1(D_{1}(t)-\widetilde{D}_1(t))+V_2(D_2(t)-\widetilde{D}_2(t))}\right)\nonumber \\
	&\leq& {S}_2(t)\left(1-\left(1-\frac{\varepsilon}{S_1(t)}\right)e^{V_1(D_{1}(t)-\widetilde{D}_1(t))+V_2(D_2(t)-\widetilde{D}_2(t))}\right) .\label{eq.Estimate.S.with.D}
	\end{eqnarray}
	Since $|D_{i}(t)-\widetilde{D}_i(t)|\leq \varepsilon$, we can write
	\begin{eqnarray}
		0 &\leq& S_2(t)-\widetilde{S}_2(t)\leq {S}_2(t)\left(1-\left(1-\frac{\varepsilon}{S_1(t)}\right)e^{-(|V_1|+|V_2|)\varepsilon}\right) \nonumber\\
		&\leq& S_2(t)\left(|V_1|+|V_2|+\frac{1}{S_1(t)}\right)\varepsilon ,\label{Eq.St2<S2}
	\end{eqnarray}
	where, for the last inequality, we used the Remark \ref{Remark.f(e)} for $a=|V_1|+|V_2|$ and $b=1/S_1(t)$.
	
	{In the case where $S_2(t)\leq\widetilde{S}_2(t)$ for a given $t>0$}, we proceed analogously and obtain
	\begin{eqnarray}
		0 \leq \widetilde{S}_2(t)-S_2(t)&\leq& \widetilde{S}_2(t)\left(1-\frac{{S}_1(t)}{\widetilde{S}_1(t)}e^{-V_1(D_{1}(t)-\widetilde{D}_1(t))-V_2(D_2(t)-\widetilde{D}_2(t))}\right)\nonumber\\
		&\leq& \widetilde{S}_2(t)\left(1-\frac{{S}_1(t)}{{S}_1(t)+\varepsilon}e^{-(|V_1|+|V_2|)\varepsilon}\right)\nonumber\\
		&\leq &S_2(0)\left(1-\left(1-\frac{\varepsilon}{{S}_1(t)+\varepsilon}\right)e^{-(|V_1|+|V_2|)\varepsilon}\right),\label{Eq.S2<St2}
	\end{eqnarray}
	since the initial condition is the same for $S_2$ and $\widetilde{S}_2$. {Thus, for every fixed $T>0$, we can apply \eqref{Eq.S2<St2} and \eqref{Eq.St2<S2} to each $t\in[0,T]$ and obtain that}
	\begin{equation}
	\norm{\widetilde{S}_2-S_2}\leq K_S \varepsilon ,
	\end{equation}
	where $K_S=S_2(0)\left(|V_1|+|V_2|+\frac{1}{S_1(T)}\right).$

	\noindent {\bf Step 2:} Suppose that $E_2(t)\geq\widetilde{E}_2(t)$. Note that 
	\[S_2(t) = S_2(t)e^{-\alpha_2 t}+\alpha_2\int_0^t S(t)e^{\alpha_2(s-t)}\]  
	and then, by Lemma \ref{S.ratio} and Lemma \ref{explicitLemma} we have
	\begin{eqnarray*}
		E_2(t)-\widetilde{E}_2(t)=& &-S_2(t)+\alpha_2\int_0^te^{\alpha_2(s-t)}S_2(s)ds + (S_2(0)+E_2(0))e^{-\alpha_2t}\\
		& &+\widetilde{S}_2(t)- \widetilde{\alpha}_2\int_0^te^{\widetilde{\alpha}_2(s-t)}\widetilde{S}_2(s)ds  - (S_2(0)+E_2(0))e^{-\widetilde{\alpha}_2t}  \nonumber \\ 
		= & & \alpha_2\int_0^te^{\alpha_2(s-t)}(S_2(s)-S_2(t))ds + (S_2(0)+E_2(0)-S_2(t))e^{-\alpha_2t}\\
		& &- \widetilde{\alpha}_2\int_0^te^{\widetilde{\alpha}_2(s-t)}(\widetilde{S}_2(s)-\widetilde{S}_2(t))ds  - (S_2(0)+E_2(0)-\widetilde{S}_2(t))e^{-\widetilde{\alpha}_2t}\\
		\leq& &  \alpha_2\int_0^te^{\alpha_2(s-t)}(S_2(s)-S_2(t))ds + (S_2(0)+E_2(0)-S_2(t))e^{-\alpha_2t}\\
		& &- \widetilde{\alpha}_2\int_0^te^{\widetilde{\alpha}_2(s-t)}(S_2(s)-S_2(t)-2K_S\varepsilon)ds  \\& &- (S_2(0)+E_2(0)-S_2(t)-K_S\varepsilon)e^{-\widetilde{\alpha}_2t}\\
		\leq & & \int_0^t \left(   \alpha_2e^{\alpha_2(s-t)} - \widetilde{\alpha}_2e^{\widetilde{\alpha}_2(s-t)}  \right)(S_2(s)-S_2(t))ds+ K_S\varepsilon e^{-\widetilde{\alpha}_2t}\\ & &+ (S_2(0)+E_2(0)-S_2(t))\left(e^{-\alpha_2t}-e^{-\widetilde{\alpha}_2t}\right)
		+ 2\widetilde{\alpha}_2K_S\varepsilon(1-e^{-\widetilde{\alpha}_2t}) .
	\end{eqnarray*}
	Now, supposing that $\widetilde{\alpha}_2<\alpha_2$, by the Lemma \ref{Remark.g(s)} for $a=\alpha_2$, $b=\widetilde{\alpha}_2$ and $h(s)=S_2(s)-S_2(t)$, we can estimate
	\begin{eqnarray*}
		E_2(t)\!-\!\widetilde{E}_2(t) &\!\leq\!&  (S_2(0)\!-\!S_2(t))\!\left( e^{-\widetilde{\alpha}_2t} \!-\!e^{-\alpha_2t} \right) \!+\! (S_2(0)\!+\!E_2(0)\!-\!S_2(t))\!\left(e^{-\alpha_2t}\!-\!e^{-\widetilde{\alpha}_2t}\right)\\
		& &\;\;\;\;+ 2\widetilde{\alpha}_2K_S\varepsilon +(1-2\widetilde{\alpha}_2)K_S\varepsilon e^{-\widetilde{\alpha}_2t}\\
		&=& -E_2(0)\left( e^{-\widetilde{\alpha}_2t} -e^{-\alpha_2t} \right) + \left(2\widetilde{\alpha}_2 +(1-2\widetilde{\alpha}_2) e^{-\widetilde{\alpha}_2t}\right)K_S\varepsilon \\
		&\leq& -E_2(0)(\alpha_2 - \widetilde{\alpha}_2)te^{-\alpha_2t}  +K_S\varepsilon 
	\end{eqnarray*}
	and, if $\widetilde{\alpha}_2>\alpha_2$ the Lemma \ref{Remark.g(s)} implies that
	\begin{eqnarray*}
		E_2(t)-\widetilde{E}_2(t) &\leq&  (S_2(0)-S_2(t))\left( e^{-\widetilde{\alpha}_2t} -e^{-\alpha_2t} \right) + (S_2(0)-S_2(t))\ln\left(\frac{\widetilde{\alpha}_2}{{\alpha}_2}\right)\\
		& &+(S_2(0)+E_2(0)-S_2(t))\left(e^{-\alpha_2t}-e^{-\widetilde{\alpha}_2t}\right)+K_S\varepsilon\\
		&=& -E_2(0)\left( e^{-\widetilde{\alpha}_2t} -e^{-\alpha_2t} \right) + (S_2(0)-S_2(t))\ln\left(\frac{\widetilde{\alpha}_2}{{\alpha}_2}\right)+K_S\varepsilon\\
		&=& E_2(0)(\widetilde{\alpha}_2-\alpha_2)te^{-\alpha_2t}  + (S_2(0)-S_2(t))\ln\left(\frac{\widetilde{\alpha}_2}{{\alpha}_2}\right) +K_S\varepsilon .
	\end{eqnarray*}

	It remains to show that the parameters $\widetilde{\alpha}_2$ and ${\alpha}_2$ are near each other under the {hypothesis} of the Theorem. To see this, we can integrate \eqref{Eq.EeS} for $i=1$ and obtain
	\begin{equation}\label{Eq.IntegralE}
	\alpha_1 \int_0^t E_1(s)\, ds= S_1(0)-S_1(t)+E_1(0)-E_1(t) .
	\end{equation}
	Thus, using the same argument for the parameter $\widetilde{\alpha}_1$ and the hypothesis \eqref{Hyp.ii} and supposing {without loss of generality} that $\widetilde{\alpha}_2>\alpha_2$, we have for every $t>0$ that
	\begin{eqnarray*}
		1\leq \frac{\widetilde{\alpha}_2}{\alpha_2}&=&\frac{\widetilde{\alpha}_1}{\alpha_1} =\left(\frac{S_1(0)-\widetilde{S}_1(t)+E_1(0)-\widetilde{E}_1(t)}{S_1(0)-S_1(t)+E_1(0)-E_1(t)}\right)\left(\frac{\int _0^tE_1 ds}{\int_0^t \widetilde{E}_1 ds}\right)\\
		&\leq&\left(1+\frac{2\varepsilon}{S_1(0)-S_1(t)+E_1(0)-E_1(t)}\right)\left(1+\frac{\varepsilon t}{\int_0^t \widetilde{E}_1 ds}\right) .
	\end{eqnarray*}
	Therefore, using for example $t=1$, there exists constants $C_1,C_2>0$ such that
	\begin{eqnarray*}
		1 &\leq \frac{\widetilde{\alpha}_2}{\alpha_2}&\leq 1+C_1\varepsilon +C_2\varepsilon^2 \\
		\mbox{and thus} \hspace{0.23\textwidth} 0&<\widetilde{\alpha}_2-\alpha_2&\leq \alpha_2C_1\varepsilon +\alpha_2C_2\varepsilon^2 \hspace{0.35\textwidth} .
	\end{eqnarray*}

	Therefore, we conclude that, for $\widetilde{\alpha}_2<\alpha_2$, there holds
\[
	E_2(t)-\widetilde{E}_2(t) \leq -E_2(0)(\alpha_2 - \widetilde{\alpha}_2)te^{-\alpha_2t} + K_S\varepsilon 
\]
	and, for $\widetilde{\alpha}_2>\alpha_2$ we have
\[
	E_2(t)-\widetilde{E}_2(t) \leq \left(\alpha_2E_2(0)te^{-\alpha_2t}  + S_2(0)-S_2(t)\right) (C_1+C_2\varepsilon )\varepsilon+K_S\varepsilon .
\]
	
	Therefore, 
	\begin{equation}\label{Eq.EstimateE}
	\norm{E_2-\widetilde{E}_2}\leq K_E(\varepsilon)\varepsilon ,
	\end{equation}
	where
	\[K_E(\varepsilon) = \left(\frac{E_2(0)}{e}+S_2(0)-S_2(T)\right)(C_1+C_2\varepsilon) +K_S ,\]
	since for every $\lambda>0$, the function $g(t):= te^{-\lambda t}$ satisfies $g(t)\leq \frac{1}{\lambda e}$.
	
	\noindent {\bf Step 3:} For the functions $I_2$ and $\widetilde{I}_2(t)$, we can use \eqref{Eq.explicitI} and write
	\begin{eqnarray}
	I_2(t)-\widetilde{I}_2(t) &=& I_2(0)\left(e^{-(\gamma_2+\mu_2) t}-e^{-(\widetilde{\gamma}_2+\widetilde{\mu}_2) t}\right) \nonumber \\
	& & +\int_0^t \left(\alpha_2 e^{(\gamma_2+\mu_2)(s-t)}E_2(s)-\alpha_2 e^{(\widetilde{\gamma}_2+\widetilde{\mu}_2)(s-t)}\widetilde{E}_2(s)\right)\,ds \nonumber \\
	&=:& J_1(t)+J_2(t) , \label{Eq.EstimateI.J1J2}
	\end{eqnarray}
	where $J_1(t)$ and $J_2(t)$ are defined by the first and second term on the right-hand side of the first equality. Firstly we can use the Mean Value Theorem to obtain
	\begin{eqnarray*}
		|J_1(t)| = |I_2(0)|\left|e^{-(\gamma_2+\mu_2) t}-e^{-(\widetilde{\gamma}_2+\widetilde{\mu}_2) t}\right| \leq |I_2(0)||\gamma_2+\mu_2 -\widetilde{\gamma}_2-\widetilde{\mu}_2|te^{-\min\{\gamma_2+\mu_2,\widetilde{\gamma}_2+\widetilde{\mu}_2\}t}.
	\end{eqnarray*}
	In order to prove that the {first} term on the right-hand side of the inequality is bounded by $\varepsilon$, note that the equations for $R_1(t)$ and $D_1(t)$ imply
	\[R'_1(t)+D'_1(t)=(\gamma_1+\mu_1)I_1(t)\]
	and therefore, supposing w.l.g. $\gamma_1+\mu_1<\widetilde{\gamma}_1+\widetilde{\mu}_1$,
	\begin{eqnarray*}
		1< \frac{\widetilde{\gamma}_1+\widetilde{\mu}_1}{\gamma_1+\mu_1} &=&\frac{\widetilde{R}_1(t)+\widetilde{D}_1(t)-R_1(0)-D_1(0)}{R_1(t)+D_1(t)-R_1(0)-D_1(0)}\;\frac{\int_0^t I_1(s)\,ds}{\int_0^t \widetilde{I}_1(s)\,ds}\\
		&\leq&\left(1+\frac{2\varepsilon}{R_1(t)+D_1(t)-R_1(0)-D_1(0)}\right)\left(1+\frac{\varepsilon t}{\int_0^t \widetilde{I}_1(s)\,ds} \right)\\
		&=&\left(1+\frac{2\varepsilon}{R_1(t)+D_1(t)-R_1(0)-D_1(0)}\right)\left(1+\frac{\varepsilon \widetilde{\mu}_1t}{\widetilde{D}_1(t)-\widetilde{D}_1(0)} \right) ,
	\end{eqnarray*}
	where the last equality comes from the equation for $\widetilde{D}_1(t)$.
	
	Since the inequality above holds for every $t>0$, there exist constants $C_3$ and $C_4$ such that
	\begin{eqnarray*}
		\frac{\widetilde{\gamma}_1+\widetilde{\mu}_1}{\gamma_1+\mu_1} &\leq& 1+C_3\varepsilon +C_4\varepsilon^2
	\end{eqnarray*}
	and then
	\begin{eqnarray*}
		0\leq \widetilde{\gamma}_1+\widetilde{\mu}_1-\gamma_1-\mu_1 &\leq& (\gamma_1+\mu_1)\left(C_3\varepsilon +C_4\varepsilon^2\right) .
	\end{eqnarray*}
	Since by hypotheses \eqref{Hyp.iii} and \eqref{Hyp.iv} we have $\widetilde{\gamma}_2+\widetilde{\mu}_2-\gamma_2-\mu_2=\widetilde{\gamma}_1+\widetilde{\mu}_1-\gamma_1-\mu_1$, we conclude that
	\begin{eqnarray}\label{Eq.EstimateJ1}
	|J_1(t)| \leq |I_2(0)|te^{-\min\{\gamma_2+\mu_2,\widetilde{\gamma}_2+\widetilde{\mu}_2\}t}(\gamma_1+\mu_1)\left(C_3\varepsilon +C_4\varepsilon^2\right).
	\end{eqnarray}

	Now for the term $J_2(t)$, we can decompose it as
	\begin{eqnarray}
	J_2(t) = H_1(t)+H_2(t)+H_3(t) \label{Eq.J2-H1H2H3}
	\end{eqnarray}
	where 
	\begin{eqnarray*}
		H_1(t)&:=&(\alpha_2-\widetilde{\alpha}_2)\int_0^t e^{(\gamma_2+\mu_2)(s-t)}E_2(s)\,ds ,\\
		H_2(t)&:=&\widetilde{\alpha}_2\int_0^t \left(e^{(\gamma_2+\mu_2)(s-t)}-e^{(\widetilde{\gamma}_2+\widetilde{\mu}_2)(s-t)}\right)E_2(s)\,ds ,\\
		H_3(t)&:=&\widetilde{\alpha}_2\int_0^t e^{(\widetilde{\gamma}_2+\widetilde{\mu}_2)(s-t)}\left(E_2(s)-\widetilde{E}_2(s)\right)\,ds .
	\end{eqnarray*}
	For the $H_1(t)$, we can use \eqref{Eq.IntegralE} and obtain
	\begin{eqnarray}
	\left|H_1(t)\right|&\leq& |\alpha_2-\widetilde{\alpha}_2|\frac{S_2(0)-S_2(t)+E_2(0)-E_2(t)}{\alpha_2} \nonumber \\
	&\leq&\left(S_2(0)-S_2(t)+E_2(0)-E_2(t)\right)(C_1+C_2\varepsilon)\varepsilon . \label{Eq.EstimateH1}
	\end{eqnarray}
	For the $H_2(t)$, we can use a similar argument as the one used for $J_1(t)$ and $H_1(t)$:
	\begin{eqnarray}
	\left|H_2(t)\right|&\leq& \widetilde{\alpha}_2|\gamma_2+\mu_2-\widetilde{\gamma}_2-\widetilde{\mu}_2| \int_0^t(t-s)e^{\min\{\gamma_2+\mu_2,\widetilde{\gamma}_2+\widetilde{\mu}_2\}(s-t)}E_2(s)\,ds \nonumber\\
	&\leq& \widetilde{\alpha}_2\frac{|\gamma_2+\mu_2-\widetilde{\gamma}_2-\widetilde{\mu}_2|}{\min\{\gamma_2+\mu_2,\widetilde{\gamma}_2+\widetilde{\mu}_2\}e} \int_0^tE_2(s)\,ds \nonumber\\
	&\leq& \frac{\widetilde{\alpha}_2}{\alpha_2} \frac{|\gamma_2+\mu_2-\widetilde{\gamma}_2-\widetilde{\mu}_2|}{\min\{\gamma_2+\mu_2,\widetilde{\gamma}_2+\widetilde{\mu}_2\}e}\left(S_2(0)-S_2(t)+E_2(0)-E_2(t)\right) \nonumber \\
	&\leq & \frac{\widetilde{\alpha}_2}{\alpha_2}\,\frac{S_2(0)-S_2(t)+E_2(0)-E_2(t)}{\min\{\gamma_2+\mu_2,\widetilde{\gamma}_2+\widetilde{\mu}_2\}e} (\gamma_1+\mu_1)\left(C_3\varepsilon +C_4\varepsilon^2\right). \label{Eq.EstimateH2}
	\end{eqnarray}

	Finally, we use  \eqref{Eq.EstimateE} to estimate $H_3(t)$ as
	\begin{eqnarray}\label{Eq.EstimateH3}
	\left|H_3(t)\right|\leq \frac{\widetilde{\alpha}_2}{\widetilde{\gamma}_2-\widetilde{\mu}_2}(1-e^{-(\widetilde{\gamma}_2+\widetilde{\mu}_2)t})K_E \varepsilon
	\end{eqnarray}
	
	and, from \eqref{Eq.EstimateI.J1J2}, \eqref{Eq.EstimateJ1}, \eqref{Eq.J2-H1H2H3}, \eqref{Eq.EstimateH1}, \eqref{Eq.EstimateH2} and \eqref{Eq.EstimateH3}, we obtain
	\begin{eqnarray*}
		\left|I_2(t)-\widetilde{I}_2(t)\right|&\leq& |I_2(0)|te^{-\min\{\gamma_2+\mu_2,\widetilde{\gamma}_2+\widetilde{\mu}_2\}t}(\gamma_1+\mu_1)\left(C_3\varepsilon +C_4\varepsilon^2\right)\\
		&&+\left(S_2(0)-S_2(t)+E_2(0)-E_2(t)\right)(C_1+C_2\varepsilon)\varepsilon\\
		&&+\frac{\widetilde{\alpha}_2}{\alpha_2}\,\frac{S_2(0)-S_2(t)+E_2(0)-E_2(t)}{\min\{\gamma_2+\mu_2,\widetilde{\gamma}_2+\widetilde{\mu}_2\}e} (\gamma_1+\mu_1)\left(C_3\varepsilon +C_4\varepsilon^2\right)\\
		&&+ \frac{\widetilde{\alpha}_2}{\widetilde{\gamma}_2-\widetilde{\mu}_2}(1-e^{-(\widetilde{\gamma}_2+ \widetilde{\mu}_2)t})K_E \varepsilon .
	\end{eqnarray*}
	Therefore, we have $\norm{I_2-\widetilde{I}}\leq K_I(\varepsilon)\varepsilon$
	where
	\begin{eqnarray*}
		K_I(\varepsilon)&=&\left(|I_2(0)|+\frac{\widetilde{\alpha}_2}{\alpha_2}\left((S_2(0)-S_2(T)+E_2(0)\right)\right)\frac{(\gamma_1+\mu_1)\left(C_3 +C_4\varepsilon\right)}{\min\{\gamma_2+\mu_2,\widetilde{\gamma}_2+\widetilde{\mu}_2\}e}\\
		&&+\left(S_2(0)-S_2(T)+E_2(0)\right)(C_1+C_2\varepsilon)
		+ \frac{\widetilde{\alpha}_2}{\widetilde{\gamma}_2-\widetilde{\mu}_2}K_E  .
	\end{eqnarray*}

	\noindent {\bf Step 4:} For the solutions $R_2$ and $\widetilde{R}_2$, we use \eqref{Eq.NumIndividuals} \[S_2+E_2+I_2+R_2+D_2=N_2=\widetilde{S}_2+\widetilde{E}_2+\widetilde{I}_2+\widetilde{R}_2+\widetilde{D}_2\] 
	and thus
	\[\norm{R_2-\widetilde{R}_2} \leq \left(K_S+K_E+K_I+1\right)\varepsilon.\]	
\end{proof}
\begin{remark} In many cases the hypothesis \ref{Hyp.ii}-\ref{Hyp.iv} can be retrieved from clinical data under the assumption that some of the parameters are equal for both populations. Hence, in general, condition \ref{Hyp.i} seems to be the most difficult to verify for practical applications. 
\end{remark}

\begin{remark}
{Note that, given a fixed model with parameters $\widetilde{\A}$, $\widetilde{\B}$, $\widetilde{\Gamma}$, $\widetilde{\M}$, the conclusion of the theorem holds only for all the solutions $S_2,E_2,I_2,R_2,D_2$ to models whose parameters $\A, \B $, $\Gamma$, $\M$ satisfy \eqref{Hyp.i}-\eqref{Hyp.iv} with respect to the fixed ones.}
\end{remark}

\begin{remark} Instead of using the equation for $D_2$ in Step 1, we could use the equation for $R_2$. Then, the estimates would depend on $\frac{\beta_{2i}-\beta_{1i}}{\gamma_i}$ instead of $\frac{\beta_{2i}-\beta_{1i}}{\mu_i}$.
	In this case, hypotheses \eqref{Hyp.i} and \eqref{Hyp.v} would change. We choose to base our estimates on death rates because data on number of deaths are more reliable for applications.

\end{remark}
\begin{remark} Alternatively, if $\B(\Gamma+\M)^{-1}=\widetilde{\B}(\widetilde{\Gamma}+\widetilde{\M})^{-1}$, i.e., if
	\[\frac{\beta_{ij}}{\gamma_j+\mu_j} = \frac{\widetilde{\beta}_{ij}}{\widetilde{\gamma}_j+\widetilde{\mu}_j} \mbox{ for } i,j=1,2 \;,   \]
	we can prove the estimates in Step 3 by using \eqref{Eq.Fconstante} and then
	\begin{eqnarray*}
		\frac{\beta_{11}}{\gamma_1+\mu_1}(S_1(t)+E_1(t)+I_1(t))+\frac{\beta_{12}}{\gamma_2+\mu_2}(S_2(t)+E_2(t)+I_2(t))&=&P_1 ,\\
		\frac{\widetilde{\beta}_{11}}{\widetilde{\gamma}_1+\widetilde{\mu}_1} (\widetilde{S}_1(t)+\widetilde{E}_1(t)+\widetilde{I}_1(t))+\frac{\widetilde{\beta}_{12}}{\widetilde{\gamma}_2+\widetilde{\mu}_2}(\widetilde{S}_2(t)+\widetilde{E}_2(t)+\widetilde{I}_2(t))&=&\widetilde{P}_1 ,
	\end{eqnarray*}
	where $P_1$ and $\widetilde{P}_1$ are constants depending on the matrices $\B(\Gamma+\M)^{-1}$ and $\widetilde{\B}(\widetilde{\Gamma}+\widetilde{\M})^{-1}$, and on the initial conditions. By hypothesis, $P_1=\widetilde{P}_1$. Then, from these two identities, we obtain
	\begin{eqnarray*}
		I_2(t)-\widetilde{I}_2(t) &=& \frac{\gamma_2+\mu_2}{\beta_{12}}\ln\left(\frac{S_1(t)}{\widetilde{S}_1(t)}\right)-S_2(t)+\widetilde{S}_2(t)-E_2(t)+\widetilde{E}_2(t)\\
		&&-\frac{\gamma_2+\mu_2}{\gamma_1+\mu_1}\frac{\beta_{11}}{\beta_{12}}\left(S_1(t)-\widetilde{S}_1(t)+E_1(t)-\widetilde{E}_1(t)+I_1(t)-\widetilde{I}_1(t)\right) ,
	\end{eqnarray*}
	and we can estimate
	\begin{eqnarray*}
		\left|I_2(t)-\widetilde{I}_2(t)\right| &=& \left(\frac{\gamma_2+\mu_2}{\beta_{12}}\frac{1}{\widetilde{S}_1(t)}+3 \frac{\gamma_2+\mu_2}{\gamma_1+\mu_1}\frac{\beta_{11}}{\beta_{12}} +K_S +K_E\right)\varepsilon.
	\end{eqnarray*}
\end{remark}

\begin{remark}\label{Remark.R}
	Although the argument in Step 4 is enough for the proof of the theorem, the estimate is not very good since it considers each of the previous classes separately. One can improve the the estimate by noticing from the $R$ and $D$ equations in \eqref{Eq.SEIRDcomponentwise} that
\[	R_2'(t)=\gamma_2 I_2(t) = \frac{\gamma_2}{\mu_2} D_2'(t)\]
	and therefore
	\[R_2(t)-R_2(0)= \frac{\gamma_2}{\mu_2} (D_2(t)-D_2(0)) \;, \;\; \forall t\geq 0.\]
	Thus, using the same argument for $\widetilde{R}_2(t)$, we obtain
	\begin{equation}\label{Eq.EstimateR}
	\left|R_2(t)-\widetilde{R}_2(t)\right|\leq\left| \frac{\gamma_2}{\mu_2} -  \frac{\widetilde{\gamma}_2}{\widetilde{\mu}_2} \right| \left| D_2(t)-D_2(0)\right| + \frac{\widetilde{\gamma}_2}{\widetilde{\mu}_2}\left|  D_2(t)- \widetilde{D}_2(t)\right| ,
	\end{equation}
	and thus using the hypotheses \eqref{Hyp.iii}, \eqref{Hyp.iv} and \eqref{Hyp.v}, we obtain the estimate.
\end{remark}

\begin{remark}\label{Remark.better}
	We could obtain better estimate for the distances between solutions for the second group by considering separate {distances} on the first group. For example, defining
	\begin{equation}\label{Eq.Epsilons}
	\begin{aligned}
	&&\varepsilon_S:=\norm{S_1-\widetilde{S}_1}\,,\;\varepsilon_E:=\norm{E_1-\widetilde{E}_1}\,,\; \varepsilon_I:=\norm{I_1-\widetilde{I}_1},\\ 
	&&\varepsilon_R:=\norm{R_1-\widetilde{R}_1} \,,\;\varepsilon_{D_1}:=\norm{D_1-\widetilde{D}_1},\;\varepsilon_{D_2}:=\norm{D_2-\widetilde{D}_2},
	\end{aligned}
	\end{equation}
	we can rewrite the estimate for the susceptible class in the second group as $\norm{S_2-\widetilde{S}_2}\leq \K_S(T)$ where
	\begin{equation}\label{Eq.EstimateS.better}
	\K_S(T):= {S}_2(T)\left(1-\left(1-\frac{\varepsilon_S}{S_1(T)}\right)e^{-\left(|V_1|\varepsilon_{D_1}+|V_2|\varepsilon_{D_2}\right)}\right) .
	\end{equation}
	
	Analogously, we have $\norm{E_2-\widetilde{E}_2}\leq \K_E(T)$  and $\norm{I_2-\widetilde{I}_2}\leq \K_I(T)$ where
	\begin{eqnarray}\label{Eq.EstimateE.better}
	\K_E(T) &=& E_2(0)|\widetilde{\alpha}_2-\alpha_2|te^{-\alpha_2t} \\&&+(S_2(0)-S_2(T))\max\left\{0,\ln\left( \frac{\widetilde{\alpha}_2}{\alpha_2}\right)\right\}+\K_S(T)\nonumber
	\end{eqnarray}
	and
	\begin{equation}\label{Eq.EstimateI.better}
	\begin{aligned}
	\K_I(T)&= \frac{|\gamma_2+\mu_2 -\widetilde{\gamma}_2-\widetilde{\mu}_2|}{\min\{\gamma_2+\mu_2, \widetilde{\gamma}_2+\widetilde{\mu}_2\} e}  \left(I_2(0)+\frac{\widetilde{\alpha}_2}{\alpha_2} \left(S_2(0)-S_2(T)+E_2(0)\right)\right)    \\ 
	&+\left|1-\frac{\widetilde{\alpha}_2}{\alpha_2}\right|\left(S_2(0)-S_2(T)+E_2(0)\right) \frac{\widetilde{\alpha}_2}{\widetilde{\gamma}_2-\widetilde{\mu}_2} \norm{E_2-\widetilde{E}_2} .
	\end{aligned}
	\end{equation}
	Note that, by the proof of the Theorem \ref{Theo.Estim}, the difference $|\widetilde{\alpha_2}-\alpha_2|$ is controlled by $\varepsilon_S$ and $\varepsilon_E$. And the difference $|\gamma_2+\mu_2 -\widetilde{\gamma}_2-\widetilde{\mu}_2|$ is controlled by $\varepsilon_R$ and $\varepsilon_D$.
	
	Finally, for the class of recovered individuals, $\norm{R_2-\widetilde{R}_2}$, we can use Remark \ref{Remark.R} to obtain 
	\begin{equation}\label{Eq.EstimateR.better}
	\norm{R_2-\widetilde{R}_2} \leq \K_R(T) := \left| \frac{\gamma_2}{\mu_2}D_2(T)  -  \frac{\widetilde{\gamma}_2}{\widetilde{\mu}_2}\widetilde{D}_2(T) +\left(\frac{\widetilde{\gamma}_2}{\widetilde{\mu}_2}-\frac{\gamma_2}{\mu_2}\right) D_2(0)\right| .
	\end{equation}
\end{remark}

\begin{remark}\label{Remark.ErrorC}
Consider the cumulative number of infectious individuals in both groups for the first set of parameters
\begin{equation}\label{Eq.CumulativeFunctionsRef}
C_1(t):=I_1(t)+R_1(t)+D_1(t)\quad\mbox{ and }\quad C_2(t):=I_2(t)+R_2(t)+D_2(t).
\end{equation}
and the respective functions for the second set of parameters
\begin{equation}
\widetilde{C}_1(t)=\widetilde{I}_1(t)+\widetilde{R}_1(t)+\widetilde{D}_1(t)\quad\mbox{ and }\quad \widetilde{C}_2(t):=\widetilde{I}_2(t)+\widetilde{R}_2(t)+\widetilde{D}_2(t) .
\end{equation}
Estimating the difference $(S_2+E_2)(t)-(\widetilde{S}_2+\widetilde{E}_2)(t)$ as we did in Step 2 of the proof of Theorem \ref{Theo.Estim}, and using \eqref{Eq.NumIndividuals}, we obtain
\begin{equation}\label{Eq.EstimateC2}
\norm{C_2-\widetilde{C}_2}= \norm{S_2+E_2-\widetilde{S}_2-\widetilde{E}_2}\leq K_E(T) .
\end{equation}
\end{remark}

Notice that hypothesis \eqref{Hyp.i} is related to social interaction between the two groups, since it involves $\beta_{ij}$. On the other hand the hypotheses \eqref{Hyp.ii}, \eqref{Hyp.iii} and \eqref{Hyp.iv} are on the parameters $\alpha_i$, $\gamma_i$ and $\mu_i$, which depend on {biological features of the disease, and also on how each group deals with the infected individuals, such as access to the health system and access to medicines.Thus, these conditions may play a role whether the definition of groups 1 and 2 refer to social or biological features.}


\section{Examples with the COVID-19} \label{Section.COVID}
{In this section we propose a method for applying the theoretical results presented in Section \ref{Section.Estimates}. The method is developed in the Subsection \ref{SubSection.NY}  and applied to New York State early epidemics of COVID-19 for validation of its predictions, since in this case data is fully known. In Subsection \ref{SubSection.PetrJua} we apply the same method to the COVID-19 epidemics in the cities of Juazeiro and Petrolina, which are neighboring cities in the northeast of Brazil that face different health policies. Our goal with this example is to detect a potential problem in the number of reported cases, since the city of Juazeiro is under a different health system and applied proportionally much less tests for detecting infected individuals compared to the city of Petrolina.}

\subsection{New York County}\label{SubSection.NY}
In this subsection, we show a numerical simulation of the results obtained in Section \ref{Section.Estimates} based on data for the early stage of the COVID-19 epidemic in the New York County and neighboring counties. We will illustrate the estimates in Theorem \ref{Theo.Estim} using data for the number of cases and deaths in the neighboring counties and deaths in the New York County to estimate the curve of infection in the New York County population {and an error range for this estimation}. The choice of these populations was made only by geographic proximity.

We consider as Group 1 the inhabitants of the following counties: Richmond, Kings, Queens, Bronx, Monmouth, Hudson and Bergen. And as Group 2 the inhabitants of New York County.  Due to the restrictions imposed by the ``New York State on Pause'' Executive Order, which affected social distancing and economic activities from late March 2020 onwards, and thus changed the interaction rates between groups, we restrict the analysis to the first 30 days of the outbreak of the disease: from March 5 (first day with registered cases in both groups) till April 3. Therefore, we will use the data presented in Table \ref{Tabela.Dados}.

\begin{table}[h!]
	\resizebox{\textwidth}{!}{	\begin{tabular}{ | c | c | c | c | c | c | c | c| c | c | c | c | c |}
			\hline 
			Day (Mar)        & 02& 03& 04& 05& 06&07 &08 &09&10&11&12&13\\
			\hline		
			Cases Group 1 &0&0&		0&		2&		2&		5&		7&		14&		17&		17&		46&		66\\
			Cases Group 2 &1&1&	1&		3&		4&		8&		8&		11&		17&		18&		39&		48\\
			Deaths Group 1 &0&0&		0&		0&		0&		0&		0&		1&		1&		1&		1&		1\\
			Deaths Group 2 &0&0&		0&		0&		0&		0&		0&		0&		0&		0&		0&		0\\
			\hline
			\hline 
			Day (Mar) &14&15&16&17&18&19&20&21&22&23&24&25\\
			\hline		
			Cases Group 1 &109&151&		246&		475&		1017&		1833&		2449&		3887&		4877&		6497&		7925&		11364\\
			Cases Group 2 &71&72&		111&		277&		590&		1038&		1314&		1863&		2072&		2572&		2887&		3616\\
			Deaths Group 1 &1&2&		4&		4&		4&		6&		7&		10&		46&		81&		125&		173\\
			Deaths Group 2 &0&0&		0&		0&		0&		0&		0&		0&		10&		19&		35&		43\\
			\hline
			\hline 
			Day (Mar-Apr)  & 26& 27&28&29&30&31&01&02&03&04&05&06\\
			\hline		
			Cases Group 1 &13764&15613&		18986&		21014&		24526&		27382&		30575&		34036&		39407&		43801&		47325&		51323
			\\
			Cases Group 2 &4046&4478&		5237&		5582&		6060&		6539&		7022&		7398&	8452&		9300&		9898&		10440	
			\\
			Deaths Group 1 &229&287&		418&		491&		570&		712&		935&		1123&		1310&		1573&		1775&		2012	 	
			\\
			Deaths Group 2 &55&65&		93&		103&		119&		129&		165&		178&		215&		264&		390&	436
			\\
			\hline
	\end{tabular}}
	\caption{Cumulative daily reported cases and deaths confirmed by testing from March 3, 2020 to April 5, 2020, for New York County (Group 2), and Richmond, Kings, Queens, Bronx, Monmouth, Hudson and Bergen counties (Group 1)  according to \cite{USAfacts}.}
	\label{Tabela.Dados}
\end{table}

 {In order to illustrate Theorem \ref{Theo.Estim}, we are assuming that data in Table \ref{Tabela.Dados} comes from a model \eqref{Eq.SEIRD} (possibly with noises) for some unknown set of parameters. This set of unknown parameters will be used as the first set \orange{mentioned} in Theorem \ref{Theo.Estim} and will be represented as $\widetilde{\A}$, $\widetilde{\B}$, $\widetilde{\Gamma}$, $\widetilde{\M}$. Also, the respective solutions to the model will be indicated by tilde. Therefore, from the data we will obtain the (assumed) model solutions $\widetilde{D}_1$ and $\widetilde{D}_2$, and the cumulative function $\widetilde{C}_1:=\widetilde{S}_1+\widetilde{E}_1+\widetilde{I}_1$ (we give more details later). The compartments $\widetilde{S}_1, \widetilde{E}_1$ and $\widetilde{R}_1$ will be extrapolated from these two. The solutions $\widetilde{S}_2, \widetilde{E}_2, \widetilde{I}_2$ and $\widetilde{R}_2$ are assumed to be unknown. We can think of it as if the data for group 2 is full of measurement errors and thus unreliable for the number of cases, but reliable for number of deaths. Thus, we will use the data on the evolution of cases of infection and deaths in Group 1 and also the evolution of deaths in Group 2 to estimate the parameters for the model. These estimated parameters will be used as the second set of parameters \orange{mentioned} in Theorem \ref{Theo.Estim} and will be represented as $\A, \B $, $\Gamma$, $\M$, without tilde, with the respective solutions represented also without the tilde. For the group 1, we will have the solutions obtained from the data (with the tilde), since the parameters are unknown, and the solutions obtained from the model with estimated parameters (without the tilde), and thus we can measure the distances between them, as presented in \eqref{Eq.group1}. For the group 2, the same can be done, only for the compartment $D$, \orange{since the only reliable data for group 2 are the number of deaths}, and thus will be measured as in \eqref{Hyp.v} of Theorem \ref{Theo.Estim}. Therefore, we will use Theorem \ref{Theo.Estim} to estimate the distance between the solutions for the other compartments of \orange{group 2}. Thus, we will be able to measure how far the solutions for the estimated model (without the tilde) are from the unknown solutions (with tilde). The theorem gives a measure on \orange{how the curves obtained by plotting the solutions of the estimated} model accurately depict the evolution of the cases in Group 2, since it gives us an error estimate where the actual curve must be.}
	
	

The hypotheses of Theorem \ref{Theo.Estim} are imposed in the fitting process as follows. Firstly, for the hypothesis \eqref{Hyp.ii} we will suppose that the latency rate is the same for both groups, i.e. $\alpha_1=\alpha_2=\alpha$, since this is a parameter related exclusively to the disease itself. Furthermore, by the same reason, after fitting this parameter to the dataset, we will consider only solutions with this same latency. This means that we will consider also $\widetilde{\alpha}_1=\widetilde{\alpha}_2=\alpha$. {We will make the same assumption with respect to the hypothesis \ref{Hyp.iii}. Therefore we will consider that $\widetilde{\gamma}_1=\widetilde{\gamma}_2=\gamma=\gamma_1=\gamma_2$.}  {We will not impose  $\mu_1=\mu_2$ because the mortality rates may not depend only on the disease itself, but be influenced by the quality of medical treatment and access to it in each group.}

Let us clarify how the data from Table \ref{Tabela.Dados} will be used. Let $\widetilde{C}_1$ and $\widetilde{D}_1$ be the  {7 days centered moving average of the cumulative} number of cases and  {cumulative} number of deaths in Group 1, respectively, as given in Table \ref{Tabela.Dados}. And $\widetilde{C}_2$ and $\widetilde{D}_2$ be the  {7 days centered moving average of the cumulative} number of cases and  {cumulative number of} deaths in Group 2, respectively as given in Table \ref{Tabela.Dados}.  Thus, in order to fit the parameters for the SEIRD model, \orange{we will use the values} of $\widetilde{C}_1$, $\widetilde{D}_1$ and $\widetilde{D}_2$ only. After fitting them, we will use the solution of the fitted SEIRD model for the number of cases in Group 2 and Theorem \ref{Theo.Estim} to estimate $\widetilde{C}_2$.

From the functions defined above, we can define the approximated number of susceptible individuals in each group by
\begin{equation}\label{Eq.Stil}
	\widetilde{S}_i(t)=N_i-\frac{\widetilde{C}_i(t+1)-\widetilde{C}_i(t)}{\alpha}-\widetilde{C}_i(t) ,\qquad i=1,2
\end{equation}
since the second term on the right-hand side approximates \orange{the} number of exposed individuals.

For the hypothesis \eqref{Hyp.i}, we will estimate the constants $V_1=(\widetilde{\beta}_{21}-\widetilde{\beta}_{11})/\widetilde{\mu}_1$ and $V_2=(\widetilde{\beta}_{22}-\widetilde{\beta}_{12})/\widetilde{\mu}_2$ as follows: from Lemma \ref{S.ratio}, we have

\[\displaystyle \frac{\widetilde{S}_1(t)}{\widetilde{S}_1(0)}= \frac{\widetilde{S}_2(t)}{\widetilde{S}_2(0)}\exp\left\{ \int_0^t\left(\left(\widetilde{\beta}_{21}-\widetilde{\beta}_{11}\right)\widetilde{I}_{1}(s)+\left(\widetilde{\beta}_{22}-\widetilde{\beta}_{12}\right)\widetilde{I}_{2}(s)\right)\,ds \right\}.
\]

{Since we are not using the data on the cases of infection in Group 2, we need to estimate $\widetilde{S}_2(t)/\widetilde{S}_2(0)$. Note that for small $t$ we have $\widetilde{S}_2(t)/\widetilde{S}_2(0)\approx 1$, by continuity}. Therefore, using the equations for $\widetilde{S}_1$ and $\widetilde{S}_2$, for small $t$ we have
\begin{equation}
\frac{\widetilde{S}_1(t)}{\widetilde{S}_1(0)}\approx\exp\left\{ V_1(\widetilde{D}_{1}(t)-\widetilde{D}_1(0))+V_2(\widetilde{D}_{2}(t)-\widetilde{D}_2(0))\,ds \right\}. \label{apl.ratio}
\end{equation}

For a pair of time values $t=t_1$ and $t=t_2$, the equation (\ref{apl.ratio}) generates the  linear system, that has solutions:
\begin{eqnarray}
V_1\approx\displaystyle\frac{\widetilde{D}_2(t_2)\ln\left(\frac{\widetilde{S}_1(t_1)}{\widetilde{S}_1(0)}\right)-\widetilde{D}_2(t_1)\ln\left(\frac{\widetilde{S}_1(t_2)}{\widetilde{S}_1(0)}\right)}{\widetilde{D}_1(t_1)\widetilde{D}_2(t_2)-\widetilde{D}_1(t_2)\widetilde{D}_2(t_1)};\label{V.values1}\\ V_2\approx\displaystyle\frac{\widetilde{D}_1(t_1)\ln\left(\frac{\widetilde{S}_1(t_2)}{\widetilde{S}_1(0)}\right)-\widetilde{D}_1(t_2)\ln\left(\frac{\widetilde{S}_1(t_1)}{\widetilde{S}_1(0)}\right)}{\widetilde{D}_1(t_1)\widetilde{D}_2(t_2)-\widetilde{D}_1(t_2)\widetilde{D}_2(t_1)},\label{V.values2}
\end{eqnarray}
as long as $\widetilde{D}_1(t_1)\widetilde{D}_2(t_2)-\widetilde{D}_1(t_2)\widetilde{D}_2(t_1)\not=0$.  The quotients $\widetilde{S}_1(t_j)/\widetilde{S}_1(0)$, $j=1,2$, can be approximated from the data using \eqref{Eq.Stil}. 

Therefore, one can calculate $V_1$ and $V_2$ for any pair of time values $t=t_1$ and $t=t_2$. We took the average of the respective \orange{absolute values of} $V_1$ and $V_2$ for all pairs of integer times $t$ between $t=1$ and $t=30$ days \orange{and then set the signs such that $V_1<0$ and $V_2>0$}. Thus we obtain {$V_1=-3.023433\times 10^{-7}$} and  {$V_2 =1.991887\times 10^{-5}$}.  These values are used to fix the relation between the transmission rates $\beta_{ij}$ and the mortality rates, due to hypothesis \eqref{Hyp.i}. Hence, given $\beta_{i1}$ and $\mu_i$, the corresponding value for $\beta_{i2}$ is given by $\beta_{i2}=\beta_{i1}-\mu_iV_i$. \orange{We observed that taking the average of \eqref{V.values1}-\eqref{V.values2} directly led to unrealistic values of $V_i$. 
Furthermore, using the absolute values to estimate $V_1$ and $V_2$ and setting them to have opposite signs gave best adjustments to the data. That is compatible with the fact that the local transmissions are stronger than the infection between individuals of different groups.}

We consider the following set of parameters: $\beta_{11},$ $\beta_{12}$, $\beta_{21},$ $\beta_{22},$ $ \alpha$, $\gamma_1$, $\gamma_2$, $\mu_1$, $\mu_2$, $E_1(0)$, $E_2(0)$. We are considering the initial conditions as a parameter too because it is not clear from the dataset how to obtain a good approximation for the evolution of the exposed class. Thus, we will use the initial conditions for the number of susceptible individuals in each group as $S_1(0)=N_1-I_1(0)-E_1(0)$ and $S_2(0)=N_2-I_2(0)-E_2(0)$, where {$N_1= 0.84 N$ and $N_2=0.16 N$, with $N=10.5\times 10^6$} being the total population of Group 1 and 2 together  {(see \cite{PopulationUSA})}. 

Under the above conditions, we fitted all the parameters mentioned above together using the Weighted Root Mean Square Error (WRMSE). The weights were used to compensate for the fact that the first group is much bigger. \orange{Therefore, we used weight 1 for the cases in Group 1, and weights 40 and 240 for deaths in Group 1 and Group 2, respectively}. The numeric calculations were made by the software R-CRAN, using the packages deSolve (\cite{R-DE}) and DEoptim (\cite{DEoptim}) for solving the ODEs and {optimizing all the parameters together}, respectively. In order to reduce overfitting, we have already set $\alpha_1=\alpha_2=\alpha$, $\gamma_1=\gamma_2$, and fixed the relation between $\beta_{1i},\beta_{2i}$ and $\mu_i$ through $V_i$, $i=1,2$, but we also impose  {$\alpha=0.3$ (value chosen from the range observed in \cite{WHO})} and a restriction to the optimizing process by considering only parameters satisfying $\beta_{i,j}\in (0,10/N)$, $i,j=1,2$, and $\gamma\in(1/14,1/4)$. These bounds are based on the observed values in \cite{WHO,Lauer,Lin,Wang}. The Table \ref{Tabela.ParametrosAjuste2} presents the fitted parameters for the SEIRD system \eqref{Eq.SEIRDcomponentwise}.

\begin{table}[h!]
	\centering
	\begin{tabular}{ | c  c ||  c  c || c  c| }
		\hline 
		\multicolumn{6}{|c|}{Parameters} \\
		\hline
		\multicolumn{1}{|c|}{Symbol}  & Value & \multicolumn{1}{|c|}{Symbol}& Value & \multicolumn{1}{|c|}{Symbol}& Value \\
		\hline
		$\beta_{11}$ & $5.817\times 10^{-8}$ & $\mu_1$  &0.010349  & $E_1(0)$  &94.64  \\
		$\beta_{12}$   & 	$5.609\times 10^{-8}$  & $\mu_2$&  0.006640     & $E_2(0)$ &81.69 \\
		$\beta_{21}$&  $	5.504\times 10^{-8}$  & $\gamma$ &0.192012   & &   \\
		$\beta_{22}$ & $1.883\times 10^{-7}$  & $\alpha$ &0.3    &	&\\
		\hline
	\end{tabular}

	\caption{Fitted parameters for the system \eqref{Eq.SEIRDcomponentwise} for the data of reported cases and deaths in Group 1 and reported deaths in Group 2 as presented in Table \ref{Tabela.Dados}. {The measurement units of the parameters are: $\beta_{ij}$ is in $(\mbox{individuals} \cdot \mbox{days})^{-1}$, $\mu_i, \gamma$ and $\alpha$ are in $(\mbox{days})^{-1}$, and $E_i(0)$ are in individuals, for all $i,j\in\{1,2\}$.}}
	\label{Tabela.ParametrosAjuste2}
	\vspace{-0.8\baselineskip}
\end{table}

Let $S_1,S_2,E_1,E_2,I_1,I_2,R_1,R_2,D_1$ and $D_2$ the solutions to \eqref{Eq.SEIRD} with the parameters showed in Table \ref{Tabela.ParametrosAjuste2}. Then the function of cumulative number of infections individuals for both groups is
\begin{equation}\label{Eq.SolucaoReferencia}
C_1(t):=I_1(t)+R_1(t)+D_1(t)\quad\mbox{ and }\quad C_2(t):=I_2(t)+R_2(t)+D_2(t).
\end{equation}

The Figure \ref{Fig.Ajustepor1} shows the graphs of $C_1(t)$, $C_2(t)$, $D_1(t)$ and $D_2(t)$ with the respective cumulative number of reported cases and deaths obtained from the dataset.

\begin{figure}[h!]
	\begin{subfigure}{.495\textwidth}
		\caption{Infectious cases in Group 1}
		\includegraphics[width=\linewidth]{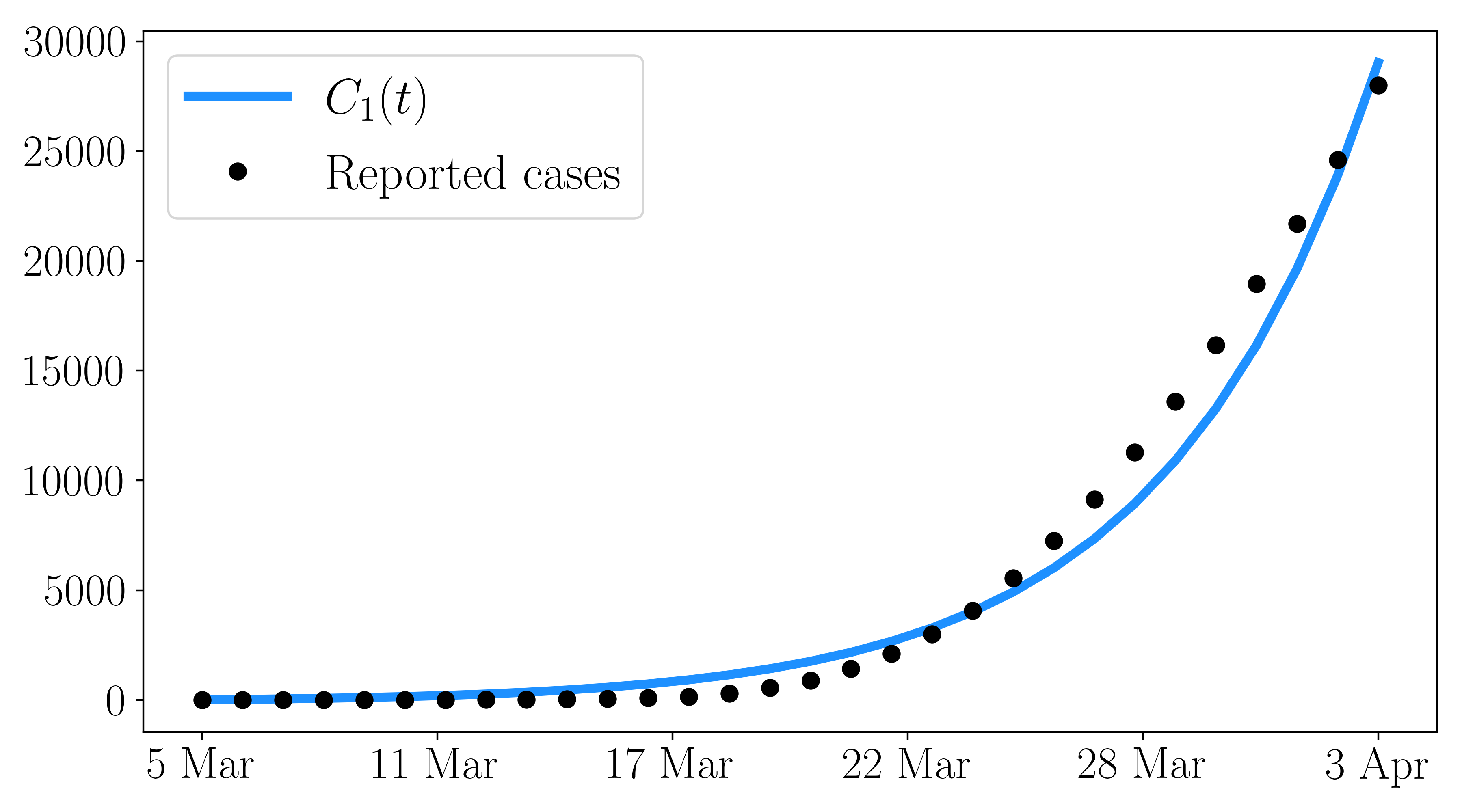}  
		\label{fig.sub.CasosViz}
	\end{subfigure}
	\begin{subfigure}{.495\textwidth}
		\caption{Deaths in Group 1}
		\includegraphics[width=\linewidth]{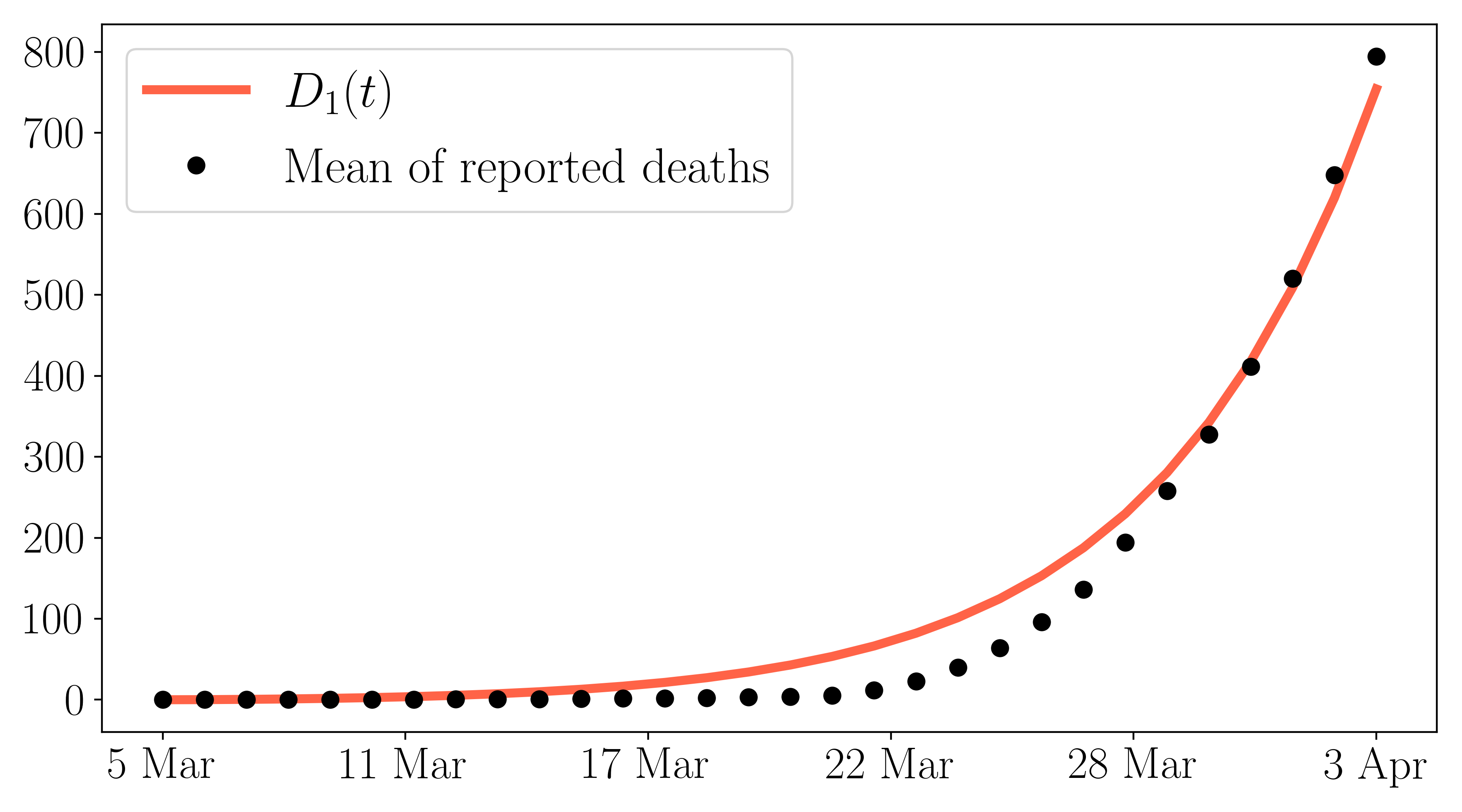}  
		\label{fig.sub.MortesViz}
	\end{subfigure}
	\vspace{-0.5cm}
	\newline	
	\begin{subfigure}{.495\textwidth}
		\caption{Infectious cases in Group 2}
		\includegraphics[width=\linewidth]{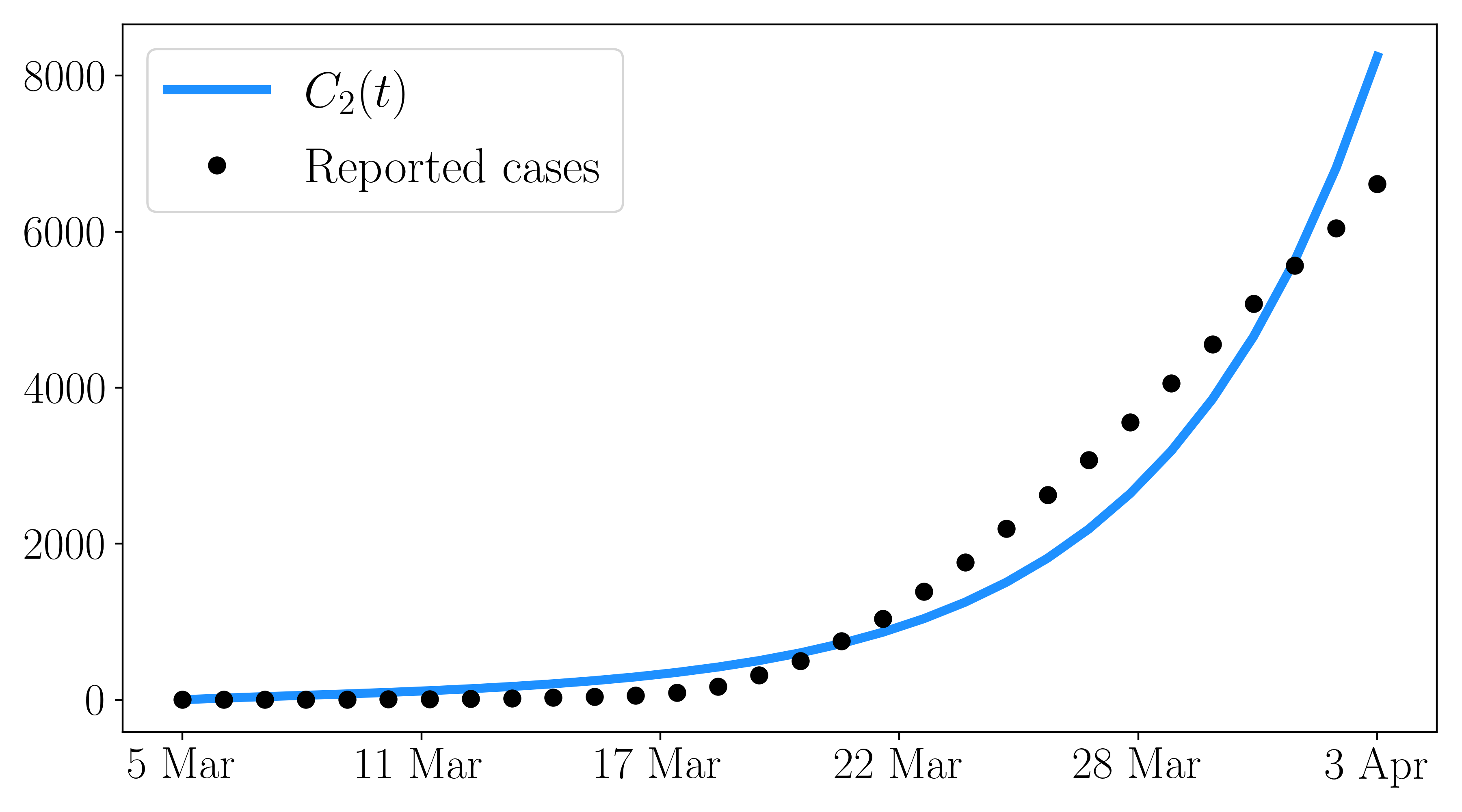}  
		\label{fig.sub.CasosNY}
	\end{subfigure}
	\begin{subfigure}{.495\textwidth}
		\caption{Deaths in Group 2}
		\includegraphics[width=\linewidth]{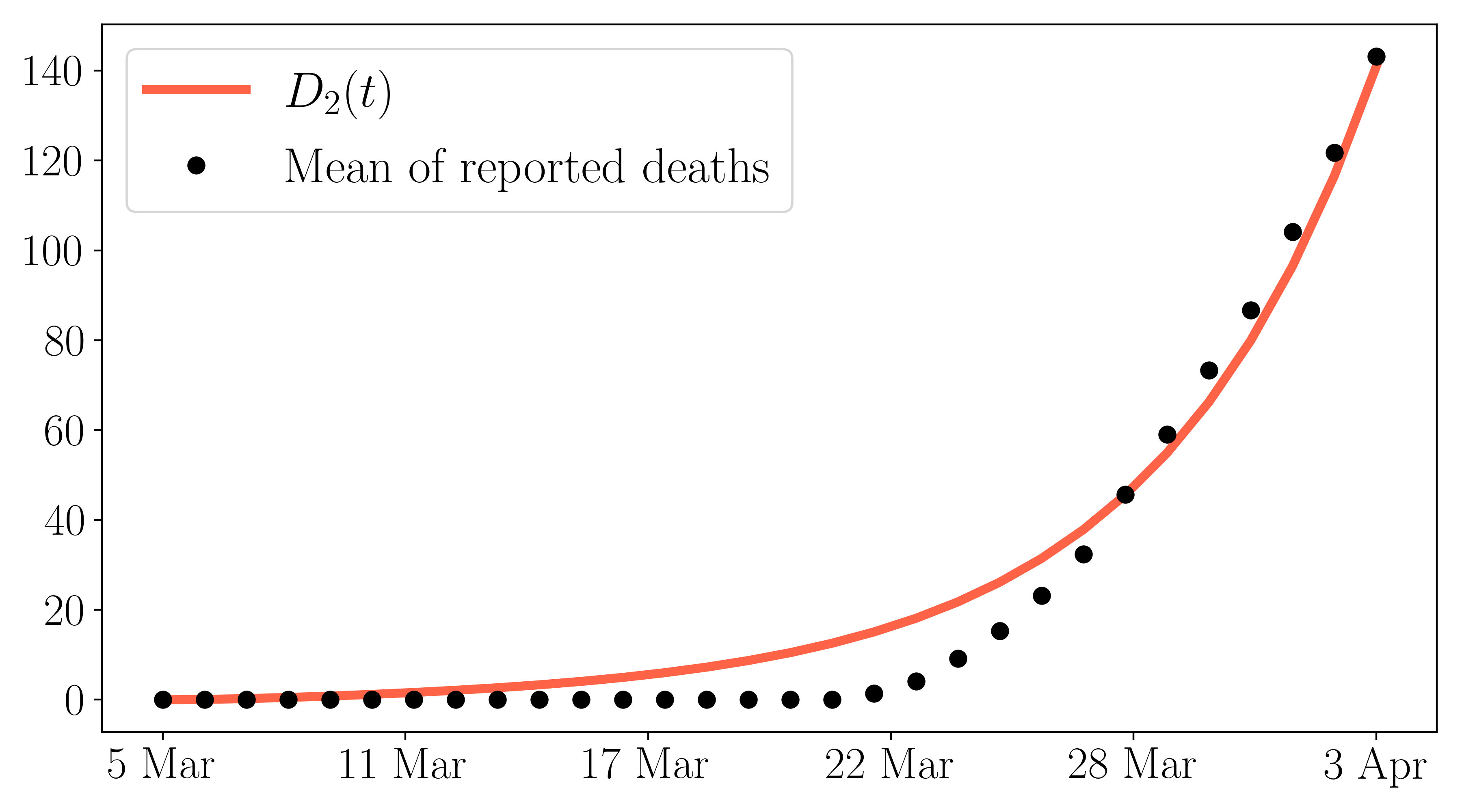}  
		\label{fig.sub.MortesNY}
	\end{subfigure}
	\vspace{-1.5\baselineskip}
	\caption{Fit of the SEIRD system \eqref{Eq.SEIRDcomponentwise} to the reported data. In (\subref{fig.sub.CasosViz}) we plot the cumulative number of reported infectious cases (black dots) and the function $C_1(t)$ (blue solid line) for the Group 1 (neighboring counties). In (\subref{fig.sub.MortesViz}), the reported deaths (black dots) and the function $D_1(t)$ (red solid line) for the Group 1. In (\subref{fig.sub.CasosNY}) we plot the cumulative number of reported infectious cases (black dots) and the function $C_2(t)$ (blue solid line) for the Group 2 (New York County). In figure (\subref{fig.sub.MortesNY}), the reported deaths (black dots) and the function $D_2(t)$ (red solid line) for the Group 2.}
	\label{Fig.Ajustepor1}
\end{figure}

In order to obtain the estimate in Theorem \ref{Theo.Estim}, let us define for every $t=1,...,30$ the distances
\[
\epsilon_{D_1}(t)=\max_{s=1,\ldots,t}|D_1(s)-\widetilde{D}_1(s)|\;,\;\; \epsilon_{D_2}(t)=\max_{s=1,\ldots,t}|D_2(s)-\widetilde{D}_2(s)|
\]
and 
\[\epsilon_{S_1}(t)=\max_{s=1,\ldots,t}|S_1(s)-\widetilde{S}_1(s)|\;,\]
where $\widetilde{D}_1, \widetilde{D}_2$ and $\widetilde{S}_1$ are the functions defined at the beginning of this example.

Therefore, from Remarks \ref{Remark.better} and \ref{Remark.ErrorC}, we obtain that $\widetilde{C}_2$ must satisfy
\[\max_{s=1,\ldots,t} |C_2(s)-\widetilde{C}_2(s)|\leq K_{C}(t)\]
where
\[K_{C}(t) := \max_{s=1,\ldots,t}\left\{ {S}_2(s)\left(1-\left(1-\frac{\varepsilon_{S_1}(s)}{S_1(s)}\right)e^{-\left(|V_1|\varepsilon_{D_1}(s)+|V_2|\varepsilon_{D_2}(s)\right)}\right)\right\} .
\]

The Figure \ref{Fig.ErroEstimativaNY} presents the functions $C_1(t)$ and $C_2(t)$, the reported cases functions $\widetilde{C}_1(t)$ and $\widetilde{C}_2(t)$ and the range of estimated distance $K_C(t)$.

{For the last day of the period, $t=30$, the values of the estimated distances for the cumulative function of infectious individuals is  $K_C(30)=1713.09$, which represents $20.78\%$ of $C_2(30)$. In  comparison, the fitted curve $C_1(t)$ has an accumulated error from the reported data by 2890.93 cases, which represents  $9.96\%$ of $C_1(30)$.}

\begin{figure}[h]
	\includegraphics[width=\textwidth]{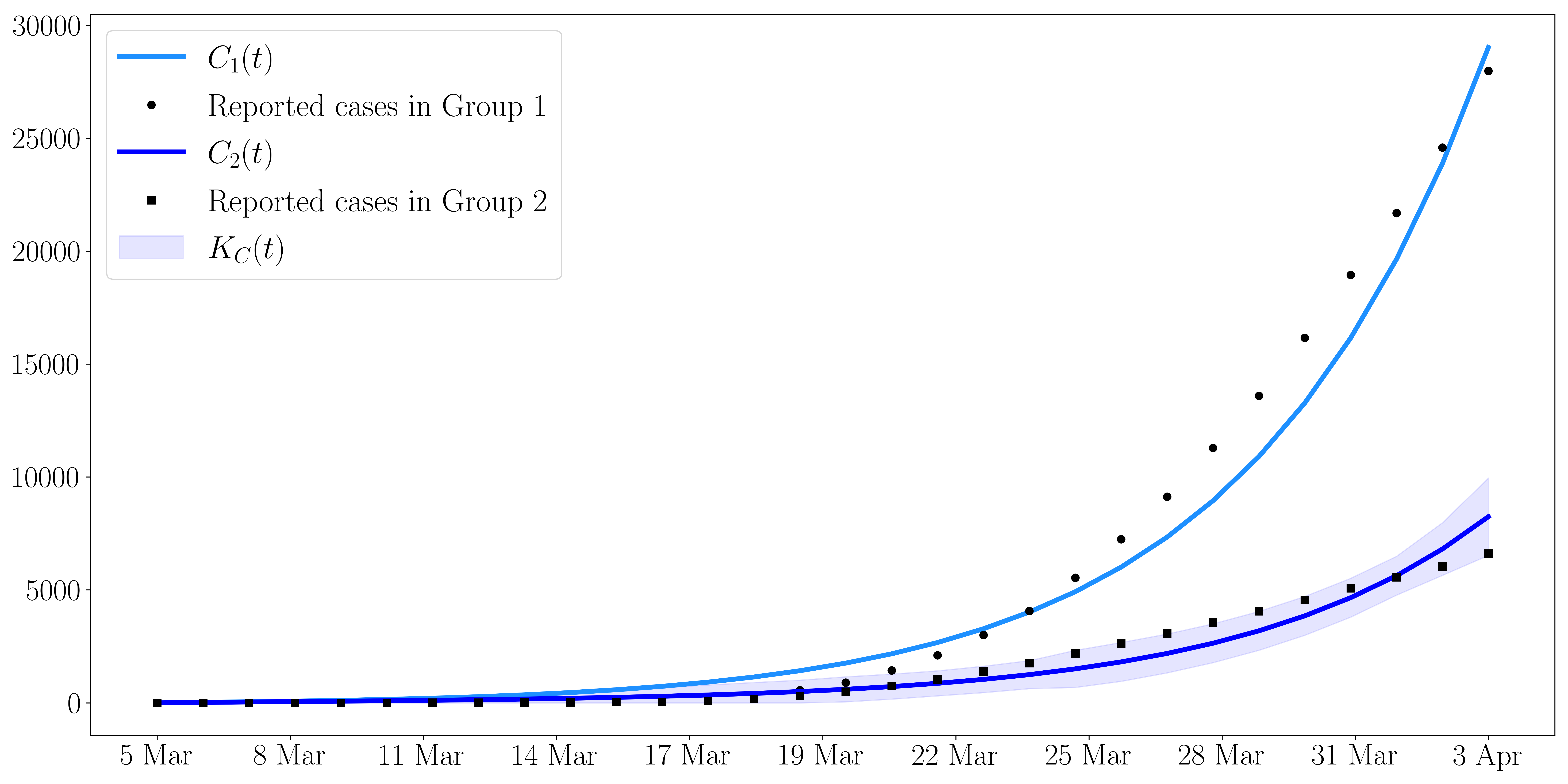}
	\caption{In this figure we plot the functions cumulative number of infectious cases $C_1(t)$ (light blue solid line) and $C_2(t)$ (dark blue solid line), the reported number of infectious cases for Group 1 (circle markers) and for Group 2 (square markers), and the estimated distance range function $K_C(t)$ (blue area).}
	\label{Fig.ErroEstimativaNY}
\end{figure}

It is important to note that although the black square markers (Group 2 data on number of infectious cases) in Figure \ref{Fig.ErroEstimativaNY} were not used in the fitting process for the curves, they lie within the estimated interval predicted by the Theorem \ref{Theo.Estim}. Indeed, since the reported number of infectious cases in Group 2 fall into the estimated distance from $C_2(t)$, this suggests that the evolution of cases follows a two group SEIRD model and the values in \eqref{Hyp.i}-\eqref{Hyp.iv} were well approximated.


\newpage

\subsection{Petrolina and Juazeiro Cities}\label{SubSection.PetrJua}

Petrolina and Juazeiro are two neighboring cities in the northeast of Brazil, with the first one located in the state of Pernambuco, and the latter \orange{located} in the state of Bahia. Although the two populations interact freely, their public health systems are under different administrations, which caused Petrolina to have performed twice as much testing for COVID-19 proportionally as Juazeiro, by May 13 2021, the last date we consider in our estimations. 

In this example, we use the data on COVID-19 spread in Petrolina and the deaths caused by the disease in Juazeiro to estimate the number of cumulative infectious cases in Juazeiro. This estimate and the range of error given by Theorem \ref{Theo.Estim} indicates that Juazeiro's cases might be under-reported. It is worth mentioning that our estimations are made under assumptions \eqref{Hyp.i}-\eqref{Hyp.iv} in Theorem \ref{Theo.Estim} on the parameters and therefore, our conclusions here should not be taken as a precise portrait of the real situation for the considered population. Our aim in this example is to indicate a possible way of estimating under-reporting.

We consider the data of reported cases and deaths for the period of 90 days from February 13 to May 13 of 2021 with Group 1 as the inhabitants of Petrolina, and Group 2 as the inhabitants of Juazeiro. The number of individuals in each group is $N_1=353800$ and $N_2=226200$, respectively, with total population given by $N=580000$. The reader  {can find this dataset} on Petrolina's and Juazeiro's city halls websites, respectively \cite{Petrolina} and \cite{Juazeiro}. To generate the fitted SEIRD model, we proceeded as in the former example and, for  {the} calculated values $V_1=-1.052314\times 10^{-6}$ and $V_2=3.731528\times 10^{-6}$ the fitted parameters for the model \eqref{Eq.SEIRDcomponentwise} can be seen in Table \ref{Tabela.ParametrosPetrJua}.

\begin{table}[h!]
	\centering
	\begin{tabular}{ | c  c ||  c  c || c  c| }
		\hline 
		\multicolumn{6}{|c|}{Parameters} \\
		\hline
		\multicolumn{1}{|c|}{Symbol}  & Value & \multicolumn{1}{|c|}{Symbol}& Value & \multicolumn{1}{|c|}{Symbol}& Value \\
		\hline
		$\beta_{11}$ & $3.364642 \times 10^{-7} $ & $\mu_1$  &0.003744 & $E_1(0)$  &296.46   \\
		$\beta_{12}$   & $3.174329 \times 10^{-7} $  & $\mu_2$& 0.002563     &$E_2(0)$&296.03 \\
		$\beta_{21}$&  $3.325242 \times 10^{-7} $  & $\gamma$ &0.157060   & & \\
		$\beta_{22}$ & $3.269971 \times 10^{-7}$  & $\alpha$ &0.3     &	&   \\
		\hline
	\end{tabular}
	\caption{Fitted parameters for the system \eqref{Eq.SEIRDcomponentwise} for the data of reported cases and deaths in Group 1, presented in \cite{Petrolina}, and reported deaths in Group 2, presented in \cite{Juazeiro}.}
	\label{Tabela.ParametrosPetrJua}
	\vspace{-1.8\baselineskip}
\end{table}

The Figure \ref{Fig.AjustePetrJua} shows the functions $C_1(t)$, $C_2(t)$, $D_1(t)$ and $D_2(t)$ from the system \eqref{Eq.SEIRDcomponentwise} for the set of parameters in Table \ref{Tabela.ParametrosPetrJua} subjected to the initial conditions
\[\begin{matrix}
S_1(0)=337822.5  , &I_1(0)= 207,&R_1(0)=15284 ,&D_1(0)=190,\\
S_2(0)=217047 , &I_2(0)= 143, &R_2(0)=8554,&D_2(0)=160,
\end{matrix}\]
and the 7-days moving average functions of the total number of reported cases and deaths for each group.

{As shown in Figure \ref{Fig.ErroEstimativaPetrJua}, the cumulative number of reported cases for Group 2 (square markers) stays far below the correspondent curve $C_2(t)$ and out of the estimated distance range given by the blue area. In fact, on the last day of the period $t=90$ (May 13), the model indicates that Group 2 should have a total of $15770$ infectious cases, with an error of $3.26\%$ ($K_C(90)=513.67$), while only $14111$ were reported. This suggests that the reported data for Group 2 are under-reported, or that the hypothesis of the theorem may not be satisfied in this example, in the sense that the estimated values in \eqref{Hyp.i}-\eqref{Hyp.iv} were not well approximated.}

\begin{figure}[h!]
	\begin{subfigure}{.495\textwidth}
		\caption{Infectious cases in Group 1}
		\includegraphics[width=\linewidth]{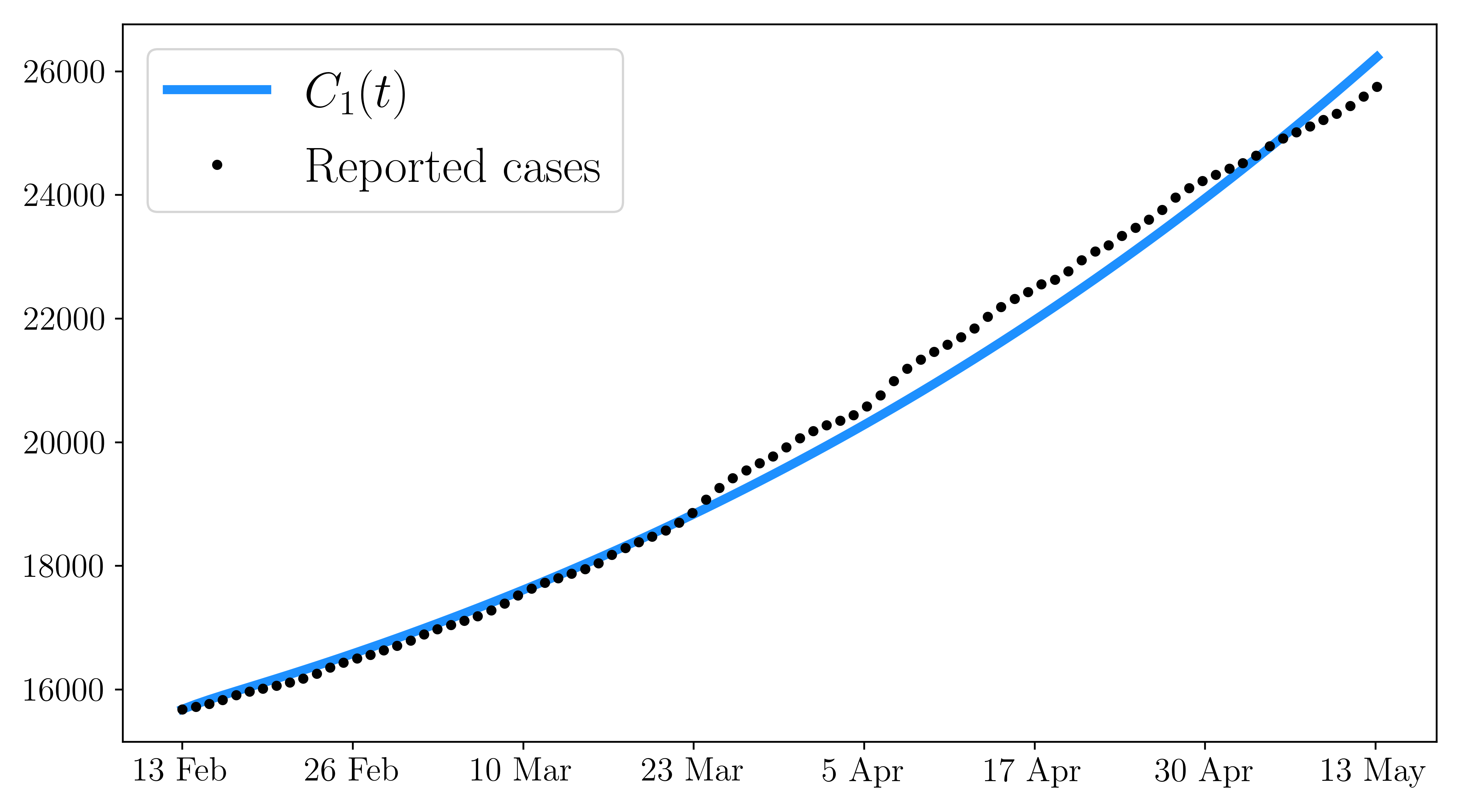}  
		\label{fig.sub.CasosPetr}
	\end{subfigure}
	\begin{subfigure}{.495\textwidth}
		\caption{Deaths in Group 1}
		\includegraphics[width=\linewidth]{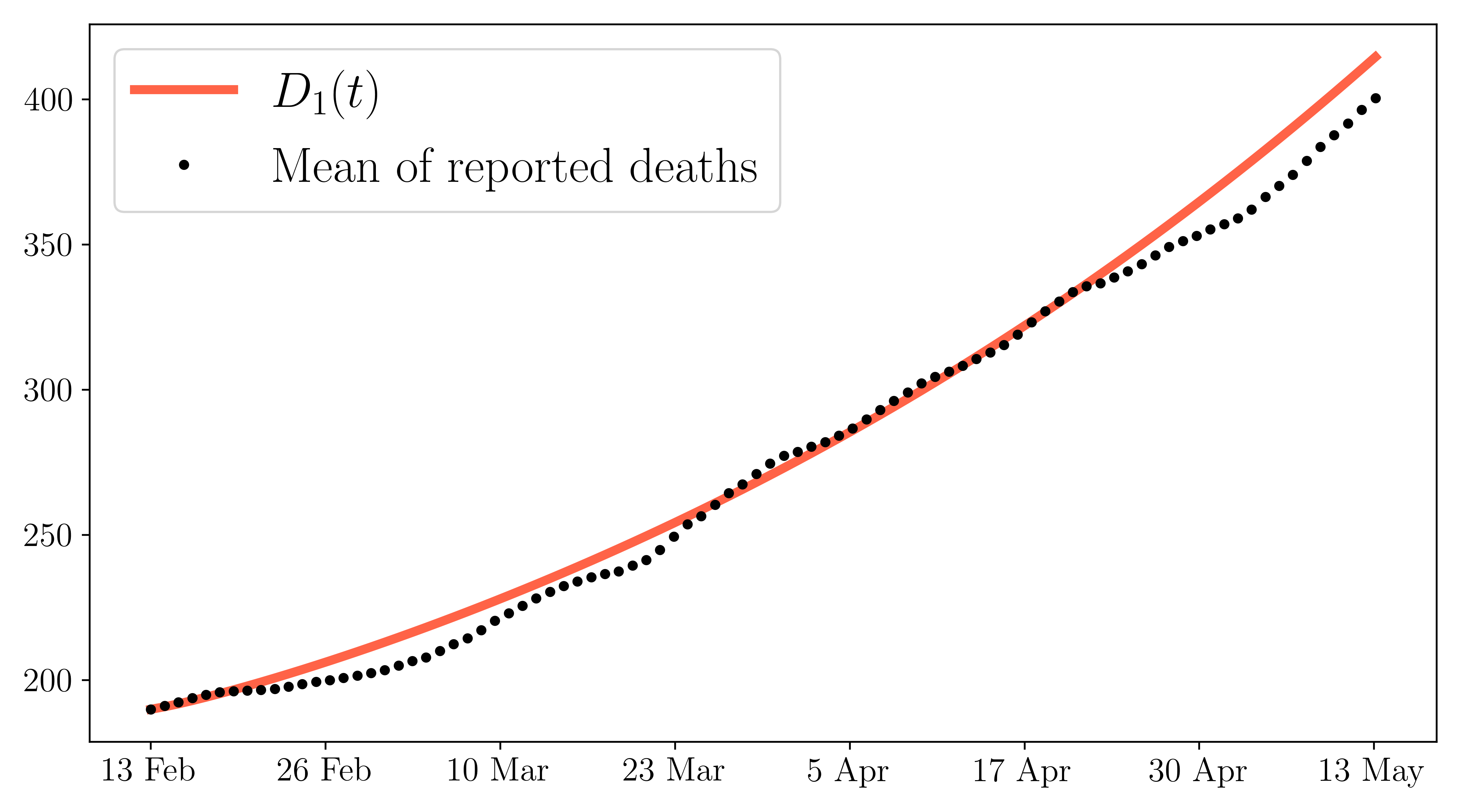}  
		\label{fig.sub.MortesPetr}
	\end{subfigure}
	\vspace{-0.5cm}
	\newline	
	\begin{subfigure}{.495\textwidth}
		\caption{Infectious cases in Group 2}
		\includegraphics[width=\linewidth]{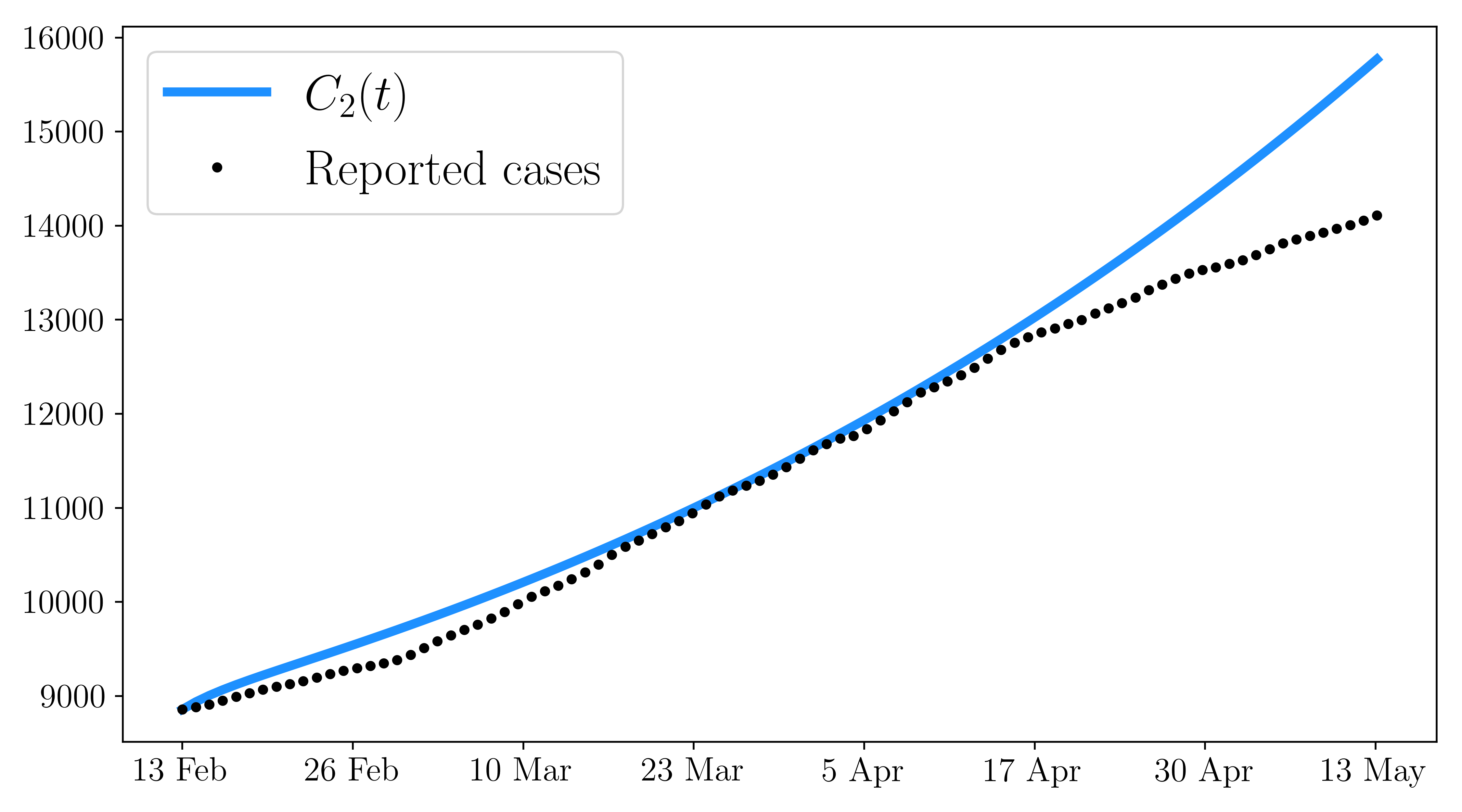}  
		\label{fig.sub.CasosJua}
	\end{subfigure}
	\begin{subfigure}{.495\textwidth}
		\caption{Deaths in Group 2}
		\includegraphics[width=\linewidth]{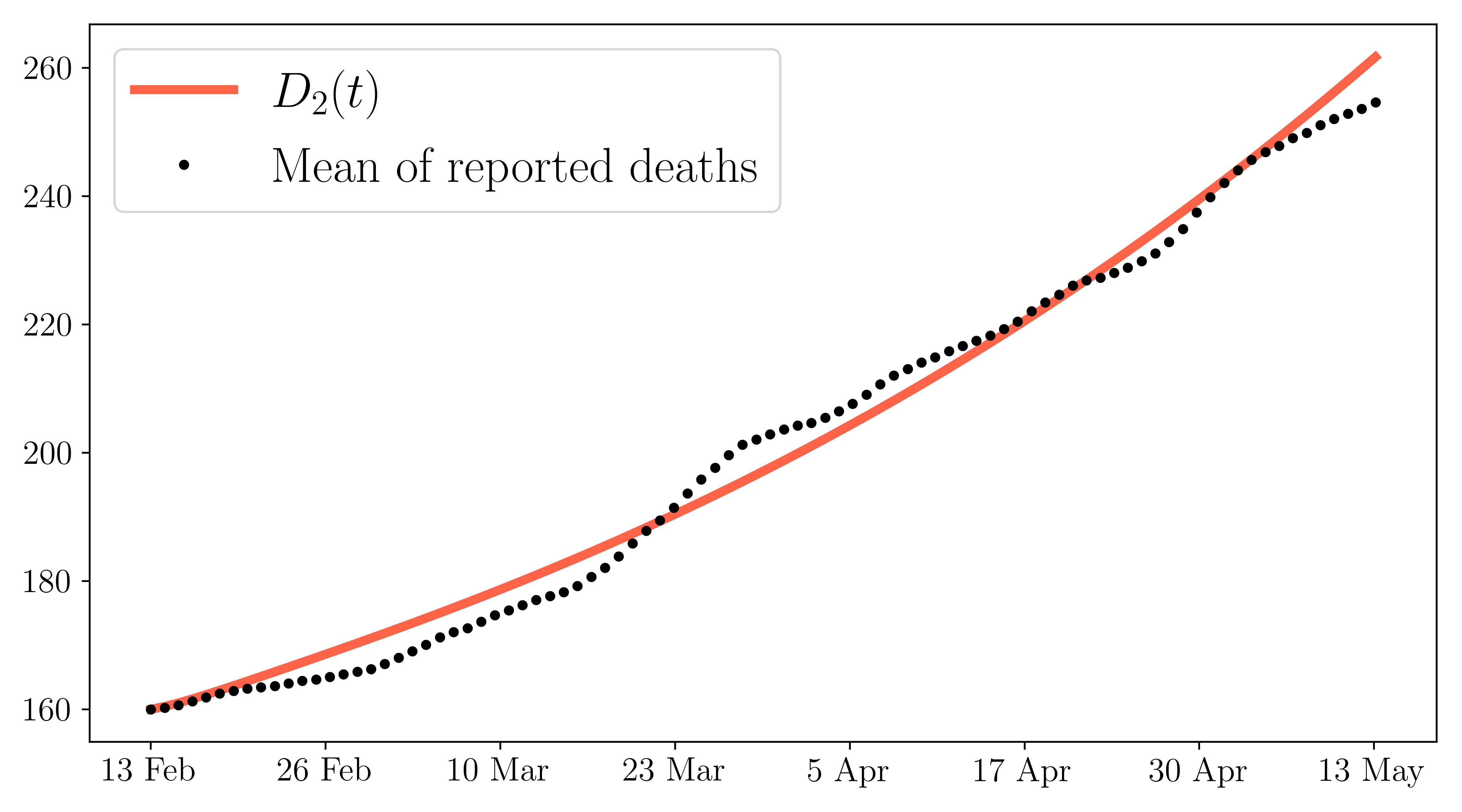}  
		\label{fig.sub.MortesJua}
	\end{subfigure}
	\vspace{-1.5\baselineskip}
	\caption{Fit of the SEIRD system \eqref{Eq.SEIRDcomponentwise} to the reported data. In (\subref{fig.sub.CasosPetr}) we plot the cumulative number of reported infectious cases (black dots) and the function $C_1(t)$ (blue solid line) for the Group 1 (Petrolina). In (\subref{fig.sub.MortesPetr}) we plot the reported deaths (black dots) and the function $D_1(t)$ (red solid line) for the Group 1. In (\subref{fig.sub.CasosJua}) we plot the cumulative number of reported infectious cases (black dots) and the function $C_2(t)$ (blue solid line) for the Group 2 (Juazeiro). In figure (\subref{fig.sub.MortesJua}), the reported deaths (black dots) and the function $D_2(t)$ (red solid line) for the Group 2.}
	\label{Fig.AjustePetrJua}
	\vspace{0.5\baselineskip}
\end{figure}

\begin{figure}[h!]
	\includegraphics[width=\textwidth]{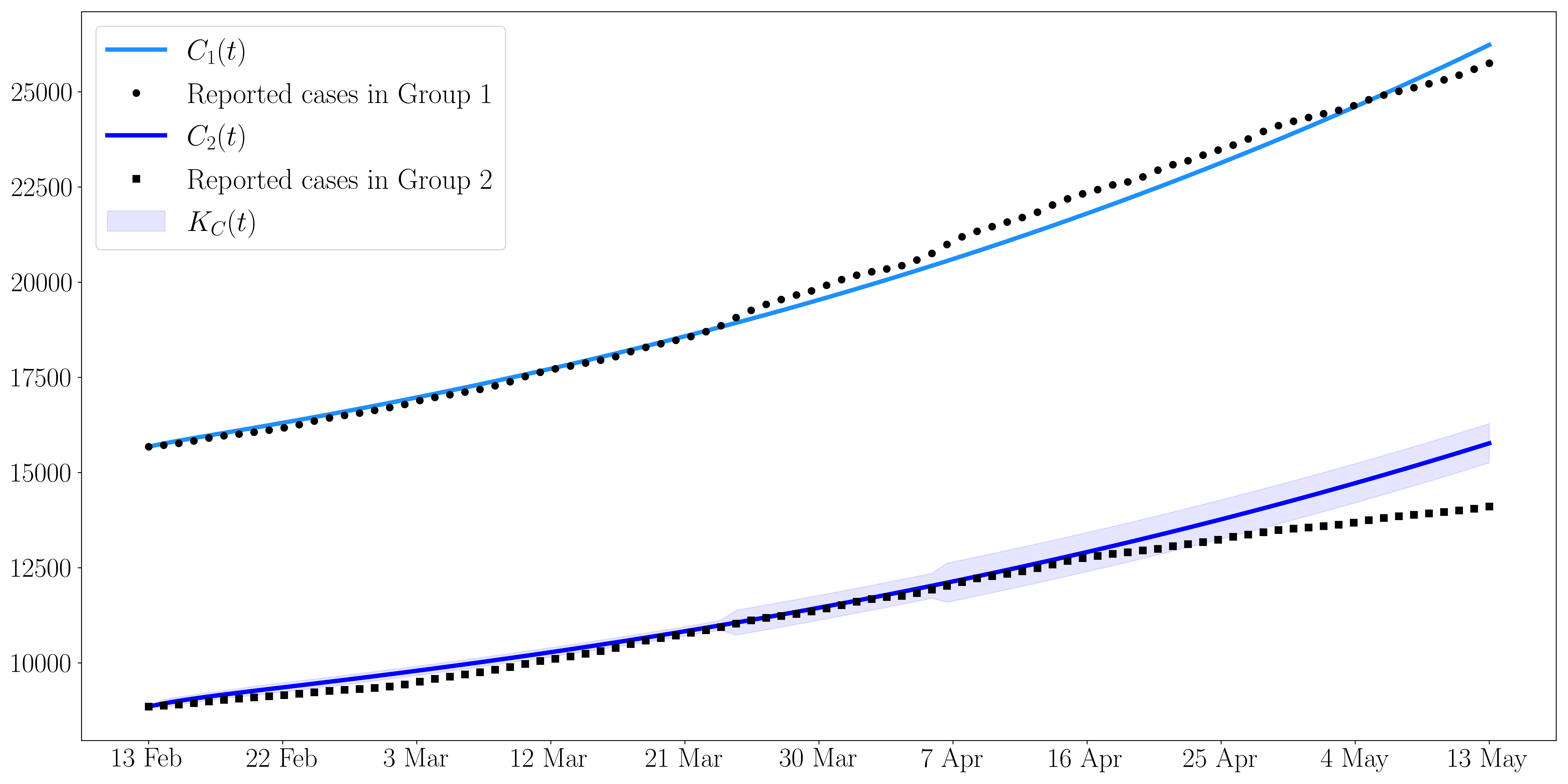}
	\caption{In this figure we plot the function cumulative number of infectious cases $C_1(t)$ (light blue solid line) and the reported number of infectious cases (black circle markers) for Group 1, the function cumulative number of infectious cases $C_2(t)$ (dark blue solid line) and reported number of infectious cases (black square markers) for Group 2, and the estimated distance range function $K_C(t)$ (blue area).}
	\label{Fig.ErroEstimativaPetrJua}
\end{figure}

\newpage

\section{Conclusions and comments}

In the Section \ref{Section.FinalSize} we proved that the final size of the susceptible individuals is always positive for both groups. The result also  {shows} that these values can be obtained by a fixed-point problem in $\R^2$.

The results and discussions presented in Section \ref{Section.Estimates} and \ref{Section.COVID} show that it is possible to use the two group SEIRD model to estimate the dynamics of an epidemics when the number of cases in one of the groups is not known or may be unreliable. 
	
In Section \ref{Section.COVID} we illustrate the results of Theorem \ref{Theo.Estim} with two examples of the spread of COVID-19 disease. The first example was used as a verification case, in order to indicate that the method of estimating the number of cases is consistent. Since the data from the New York State is more reliable, it was expected that the number of reported cases in the New York County would fall into the range of error given by the Theorem with respect to the calculated function of cumulative cases. The parameters obtained for the model SEIRD led to this result and it may be seen in Figure \ref{Fig.ErroEstimativaNY}.  The second application concerns to two neighboring Brazilian cities, Petrolina and Juazeiro, which are under different health policies with Petrolina submitting more inhabitants to testing for COVID-19 than Juazeiro. Therefore, in this example we were interested in checking if the proportionally low number of reported cases in the city of Juazeiro is a potential situation of under-reporting. The implementation of the method for fitting the parameters of the SEIRD model and the range error estimation given by Theorem \ref{Theo.Estim} and showed in Figure \ref{Fig.ErroEstimativaPetrJua}, suggests that the number of infectious cases in Juazeiro may be indeed greater than the ones reported by the health system and present an interval were the number of cases might actually be. \orange{It worth mentioning that we are assuming that reported data from group 1 is reliable and comes from a SEIRD model, possibly with noise, even though it is well known that under-reporting occurred worldwide (\cite{Pul}) and that the data for group 1 may not be accurate, specially in the Petrolina-Juazeiro case.}

	
Since the full dataset is not available, it is not possible to be sure of the results of the method in general. However, combined with other tools, the method might be useful to guide public health polices, for instance when reliable data is lacking for part of the considered population.

In Section \ref{Section.COVID} the definition of the groups is based only on geographic aspects, but one could apply it in situations where the groups are defined by socioeconomic classes, gender or age.
	
It is worth mentioning that in both examples, we are supposing that the data follows a SEIRD model (possibly with noises) for an unknown set of parameters. We then fitted model presented in each case with parameters which satisfy the hypotheses of Theorem \ref{Theo.Estim} with respect to the unknown parameters. In particular, from the available dataset we estimate $V_1=(\widetilde{\beta}_{21}-\widetilde{\beta}_{11})/\widetilde{\mu}_1 ,\ i=1,2$, and therefore in the parameter fitting process we only consider the class parameters for \eqref{Eq.SEIRDcomponentwise} that satisfy these values of $V_i$. I.e., the light blue areas of the Figures \ref{Fig.ErroEstimativaNY} and \ref{Fig.ErroEstimativaPetrJua} contain all solutions to models whose parameters satisfy these values of $V_i$.

We emphasize that these estimations are here only to illustrate the application of our theoretical results. Additional arguments would be needed to validate our assumptions before making assertive statements on the actual scenario of the considered populations.

\vspace{0.4cm}
\noindent {\bf Acknowledgments}
	{The authors are grateful to Sergio Floquet for the fruitful discussions on the subject and the help with the parameter fitting process. The authors would also like to express their sincere thanks to the anonymous referees for their valuable comments and useful suggestions, which significantly contributed to improving the quality of the article.}

\noindent {\bf Conflict of interest} 	
	The authors declare that they have no conflict of interest.

\newpage

\bibliographystyle{myplain}
\bibliography{references}
\end{document}